\newtheorem{theorem}{Theorem}
\newtheorem{lemma}[theorem]{Lemma}
\newtheorem{proposition}[theorem]{Proposition}
\newtheorem{definition}[theorem]{Definition}
\newtheorem{example}[theorem]{Example}
\newcommand{\figcaption}[1]{\caption{\small #1}}
\theoremstyle{remark}
\newtheorem{remark}{Remark}
\newcommand{\ket}[1]{\left|#1\right\rangle}
\newcommand{\bra}[1]{\left\langle#1\right|}
\newcommand{\ADV}{{\rm{ADV}}}
\newcommand{\wsize}{{\rm{wsize}}}
\newcommand{\Span}{{\rm{span}}}
\title{Span Program for Non-binary Functions}
\author[1]{Salman Beigi} 
\affil[1]{\it \small School of Mathematics, Institute for Research in Fundamental Sciences (IPM), Tehran, Iran}
\author[2]{Leila Taghavi}
\affil[2]{\it \small School of Computer Science, Institute for Research in Fundamental Sciences (IPM), Tehran, Iran}
\date{}
\begin{document}

\maketitle
\begin{abstract}
Span programs characterize the quantum query complexity of \emph{binary} functions $f:\{0,\ldots,\ell\}^n \to \{0,1\}$ up to a constant factor. In this paper we generalize the notion of span programs for functions with \emph{non-binary} input/output alphabets $f: [\ell]^n \to [m]$. 
We show that \emph{non-binary span program} characterizes the quantum query complexity of any such function up to a constant factor. We argue that this non-binary span program is indeed the generalization of its binary counterpart.
We also generalize the notion of span programs for a special class of relations. 

Learning graphs provide another tool for designing quantum query algorithms for binary functions. In this paper, we also generalize this tool for non-binary functions,  and as an application of our non-binary span program show that any non-binary learning graph gives an upper bound on the quantum query complexity. 
 
\end{abstract}

\section{Introduction}
Query complexity of a function is the number of queries to its input bits required to compute it.  In quantum query algorithms queries can be made in superposition, so we sometimes observe speed-up in quantum query algorithms comparing to their classical counterparts, e.g., a quadratic speed-up in Grover's algorithm~\cite{Grover96}. 

The quantum query complexity is characterized by semi-definite programs (SDP). 
The \emph{generalized adversary method}~\cite{Amb02, HLS07} is a certain semi-definite programming optimization problem which gives lower bounds on the quantum query complexity of a function. Surprisingly, the dual of such a SDP, called the \emph{dual adversary bound}, gives an upper bound on the quantum query complexity~\cite{LMRS10}. Thus  the generalized adversary bound characterizes the quantum query complexity of all functions up to a constant factor.

Although each feasible point of the dual adversary bound results in a quantum query algorithm, finding good such feasible solution is a hard problem in general since the size of this SDP is usually so huge (exponential in the size of the input of the function) that makes it intractable. Span programs~\cite{Rei09} and learning graphs~\cite{Bel14} are two methods for finding such solutions which result in quantum query algorithms and upper bounds on the quantum query complexity if the underlying function is binary. Indeed, similar to the generalized adversary bound, span programs characterize the quantum query complexity up to a constant factor~\cite{Rei09}, while learning graphs only provide upper bounds on it. 
Comparing to finding feasible solutions for the dual adversary SDP,
these methods are sometimes more effective for designing quantum query algorithms (see e.g.,~\cite{Bel11,LMM11}). Nevertheless, span programs work only for binary functions $f: \{0,1\}^n\rightarrow \{0,1\}$ and the method of learning graphs work only for functions with binary output. 
To overcome this difficulty Ito and Jeffery~\cite{ItoJeffery15} put forward a definition of span program that works for functions with non-binary input alphabet, and showed that such span programs characterize the quantum query complexity of functions of the form $f:[\ell]^n\to \{0,1\}$. However, in their work, the output of the function is still binary.

\paragraph{Our results:} In this paper we generalize the methods of span programs and learning graphs for non-binary functions $f: D_f\subseteq [\ell]^n\rightarrow [m]$, i.e., functions with non-binary input and/or non-binary output alphabets. 
Building on the work of It and Jeffery~\cite{ItoJeffery15}, we introduce a notion of non-binary span program and prove that it characterizes the quantum query complexity of any function $f: D_f\subseteq [\ell]^n\rightarrow [m]$ up to a constant factor. Thus non-binary span program can also be used to design quantum query algorithms. 

We introduce yet another notion of non-binary span program which we call non-binary span program \emph{with orthogonal inputs}. The latter is indeed a special case of our non-binary span program, yet since it is more intuitive and easier to understand we study this special case separately. We show that non-binary span programs with orthogonal inputs characterize the quantum query complexity of any function $f: D_f\subseteq [\ell]^n\rightarrow [m]$ up to a factor of order $\sqrt{\ell-1}$. 
 In particular, non-binary span programs with orthogonal inputs work equally well as their binary counterparts when the input alphabet is binary while the output alphabet is arbitrary. 

Our second contribution is to generalize the notion of learning graphs for non-binary functions in Section~\ref{sec:NBLG}. We show that non-binary learning graphs give an upper bound on the quantum query complexity of non-binary functions. Our proof of this result about non-binary learning graphs is based on non-binary span programs.

The reader may suggest that a non-binary function $f:[\ell]^n\rightarrow [m]$ can be thought of as a collection of $\log(m)$ binary functions, so there is no point in generalizing span programs to the case of non-binary output. Indeed, each letter of the input alphabet is a $\log(\ell)$-bit string and similarly for the output bits. Thus $f$ can be thought  of as $\log(m)$ binary functions and to estimate the quantum query complexity of $f$ one can design $\log(m)$ binary span programs, one for each of these functions. While this approach does give an upper bound on the quantum query complexity of $f$, its effectiveness is questionable:
\begin{itemize}
\item First, to get a bound on the quantum query complexity with this approach we need to design several (binary) span programs, which at least for some functions does not seem effective.
\item Second, it is not clear how  tight such a bound would be. Our non-binary span program gives a bound that is tight up to a constant factor. 
\item Third, and more importantly, the above suggestion usually kills off the whole point of span program that is an intuitive way of designing feasible points of the dual adversary SDP; thinking of a non-binary function as a collection of binary ones usually destroys the intuition behind the definition of that function.  For example, think of the \emph{Max} function which outputs the maximum of a collection of numbers (and will be worked out later in this paper). It is much easier to think of the maximum of some numbers comparing to its individual bits.
\end{itemize}

\section{Preliminaries}
In this section we review the notions of generalized adversary bound and span program. 
We first fix some notations.
Throughout the paper we use Dirac's ket-bra notation, e.g., $\ket v$ is a complex (column) vector whose conjugate transpose is $\bra v$; Moreover, $\langle v| w\rangle$ is the inner product of vectors $\ket v, \ket w$. 
For a matrix $A$, we denote its $(i,j)$-th entry by $A\llbracket i,j\rrbracket$. The Hadamard (entry-wise) product of two matrices $A$ and $B$ is denoted by $A\circ B$. The Hermitian conjugate of matrix $A$ is denoted by  $A^\dagger$ which is obtained from $A$ by taking the transpose and then taking the complex conjugate of each entry.
 $A\succeq 0$ means that $A$ is a positive semi-definite matrix. $\|A\|$ is the operator norm of the matrix $A$, i.e., the maximum singular value of $A$. We also use
$$[\ell] = \{0, \ldots, \ell-1\}.$$ 
Note that this is unlike the convention that $[\ell]$ denotes $\{1, \dots, \ell\}$ since we would like $0$ to be a symbol in our alphabets.   
Finally, the Kronecker delta symbol $\delta_{a,b}$ is equal to $1$ if $a$ and $b$ are equal, and is $0$ otherwise.

We emphasize that all functions considered throughout the paper are assumed to be partial functions with domain $D_f\subseteq [\ell]^n$ and output set $[m]$, so that  $f:D_f\rightarrow [m]$.
We say that $f$ has a non-binary input alphabet if $\ell>2$, and has a non-binary output alphabet if $m>2$.

In this paper we deal with the problem of computing a function $f:D_f\to[m]$ in the query model, in which case its input $x= (x_1, \dots, x_n)\in D_f\subseteq  [\ell]^n$ is given via queries to its coordinates. 
In the classical setting a query is of the form ``what is the value of the $j$-th coordinate of the input?'' The answer to this query would be the value of $x_j$. Queries are made \emph{adaptively}, i.e., after each query the algorithm decides what to do next based on the values of all the previously queried indices and queries the next index if needed. 

In the quantum setting, a query can be made in superposition. Such queries to an input $x$ are modeled by a unitary operator $O_x$ as follows:
\begin{equation*}
O_x|j,p\rangle=|j,(x_j+p) \mod \ell \rangle.
\end{equation*}
Here the first register contains the coordinate index $j$, and the second register saves the value of $x_j$ in a reversible manner. Thus a quantum query algorithm for computing $f(x)$ is a sequence of unitaries some of which are $O_x$ and the others are independent of $x$ (but can depend on $f$ itself). At the end a measurement determines the outcome of the algorithm. We say that the algorithm computes $f$, if for every $x\in D_f\subseteq [\ell]^n$ the outcome of the algorithm equals $f(x)$ with probability at least $2/3$.
The complexity of such an algorithm is the maximum number of queries, i.e., the number of $O_x$'s in the sequence of unitaries. We denote by $Q(f)$ the \emph{quantum query complexity} of $f$, namely, the minimum query complexity among all quantum algorithms that compute $f$.

\subsection{ The Adversary Bound and its Dual}\label{ADVbound and dual}
The \emph{generalized adversary bound} introduced by H{\o}yer, Lee and \v{S}palek~\cite{HLS07}, based on the work of Abmainis~\cite{Amb02}, gives a lower bound on the quantum query complexity of a function $f:D_f\to [m]$ with $D_f\subseteq [\ell]^n$. 
A \emph{symmetric} matrix $\Gamma\in \mathbb{R}^{|D_f|\times |D_f|}$ whose rows and columns are indexed by elements of the domain of $f$,
is called an \emph{adversary matrix} for $f$ if  
\begin{equation*}
\Gamma\llbracket x,y\rrbracket=0, \qquad \forall x, y\in D_f ~\text{ s.t.}  ~f(x)=f(y).
\end{equation*}
Observe that arranging elements of $D_f$ based on the values $f(x)$, an adversary matrix takes the form of an $m\times m$ block matrix all of whose diagonal blocks are zero. 
 We also let $\Delta_j \in \mathbb R^{|D_f|\times |D_f|}$ with 
 $$\Delta_j\llbracket x,y\rrbracket=1-\delta_{x_j,y_j}.$$
Now the general adversary bound (hereafter, adversary bound) is defined by
\begin{equation}\label{eq:ADV+}
\ADV^\pm(f)=\max_{\Gamma\neq 0} \,\frac{\|  \Gamma \| }{\max_{1\leq j\leq n} \|  \Gamma \circ \Delta_j\| },
\end{equation}
where the maximum is taken over all non-zero (symmetric) adversary matrices $\Gamma$. As mentioned before the adversary bound is a lower bound on the quantum query complexity: $\ADV^\pm(f)\leq Q(f)$.

The adversary bound can be rewritten in the form
\begin{subequations}\label{SDP:ADV}
\begin{align}
\max &\quad \| \Gamma\|  \\
\text{subject to} &\quad \| \Gamma\circ \Delta_j\| \leq 1 \qquad \forall j, ~1\leq j\leq n, \label{eq:ADV,const1}\\
 &\quad\Gamma \text{: adversary matrix,}
\end{align}
\end{subequations}
which is a semi-definite program (SDP). The dual of this SDP is called the \emph{dual adversary bound} and takes the form:
\begin{subequations}\label{SDP:GDADV}
\begin{align}
\min & \quad\max_{z\in D_f}~\sum_{j=1}^n (X_j+Y_j)\llbracket z,z\rrbracket \\
\text{subject to} &\quad \sum_{j: x_j\neq y_j}(X_j-Y_j)\llbracket x,y\rrbracket =1 \qquad \forall x, y \in D_f \textnormal{ s.t. } f(x)\neq f(y),\\
&\quad  X_j,Y_j\succeq 0 \qquad  \forall j, ~1\leq j\leq n.
\end{align}
\end{subequations}
Here $X_j,Y_j\in \mathbb R^{|D_f|\times |D_f|}$ are positive semi-definite matrices whose rows and columns are labeled by elements of $D_f$.

The optimal value of the dual adversary bound~\eqref{SDP:GDADV} equals the following SDP up to a  factor of at most 2~\cite{LMRS10}.
\begin{subequations}\label{eq:dual-SDP}
\begin{align} 
\min &\quad \max_{x\in D_f}\; \max\bigg\{\sum_{j=1}^n \big\| |u_{xj}\rangle \big\| ^2, \sum_{j=1}^n \big\| |v_{xj}\rangle \big\| ^2\bigg\}\\
\text{subject to} & \quad \sum_{j: x_j\neq y_j} \langle u_{xj}|v_{yj}\rangle=1-\delta_{f(x),f(y)} \qquad \forall x,y\in D_f,
\end{align}
\end{subequations}
with vectors $|u_{xj}\rangle,|v_{xj}\rangle $'s being the variables. In the sequel by the dual adversary SDP we sometimes mean~\eqref{eq:dual-SDP} since it is essentially equivalent to~\eqref{SDP:GDADV}.

The constraint~\eqref{eq:ADV,const1} is equivalent to
\begin{subequations}
\begin{align}
 \Gamma\circ \Delta_j\preceq I, \qquad \forall j,\label{eq:extraC0}\\
-I\preceq \Gamma\circ \Delta_j, \qquad \forall j.\label{eq:extraC}
\end{align}
\end{subequations}
These two constraints correspond to two sets of variables $X_j$'s and $Y_j$'s in the dual SDP~\eqref{SDP:GDADV}. 
Now when the output set of $f$ is binary, i.e., $m=2$, for any adversary matrix $\Gamma$, the spectrum of $\Gamma\circ\Delta_j$ is symmetric with respect to $0$ (because $\Gamma\circ\Delta_j$ is a $2\times 2$ symmetric block matrix with zero blocks on the diagonal). In this case~\eqref{eq:extraC0} and~\eqref{eq:extraC} are equivalent to each other, and one of them can be dropped. Indeed, when $m=2$, the constraint~\eqref{eq:ADV,const1} can be replaced with 
\begin{equation}
\Gamma\circ \Delta_j\preceq I, \qquad \forall j.
\end{equation}
Then the variables in the dual SDP associated to the other constraints~\eqref{eq:extraC} can be dropped and set to be zero, i.e., when $m=2$ with no loss of generality in~\eqref{SDP:GDADV} we may put $Y_j=0$. Similarly in this case in~\eqref{eq:dual-SDP} the vectors $\ket{u_{xj}}$ and $\ket{v_{xj}}$ can be taken to be equal.

The adversary bound, being a lower bound on the quantum query complexity $Q(f)$, is also an upper bound on $Q(f)$ up to a constant factor. More precisely, any feasible point of the dual adversary bound~\eqref{SDP:GDADV} (and of~\eqref{eq:dual-SDP}) results in a quantum query algorithm for $f$ whose query complexity equals the objective value up to a constant factor. 
This fact was first proved by Reichardt~\cite{Rei09} for functions with binary output and then generalized for all functions by Lee et al. \cite{LMRSS11}. As a result, to determine $Q(f)$ it is enough to compute $\ADV^\pm(f)$, and to design quantum query algorithms it is enough to find feasible solutions to~\eqref{SDP:GDADV} or~\eqref{eq:dual-SDP}.

Finding desirable feasible points of~\eqref{SDP:GDADV} or~\eqref{eq:dual-SDP} is a hard job in general because usually the size of these SDPs is so huge. Span programs and learning graphs are two intuitive methods for finding such solutions which have already resulted in several quantum query algorithms. In the rest of this section we describe span programs, and defer learning graphs for Section~\ref{sec:NBLG}.

\subsection{Span Program} \label{sec:SP}
Span program is a model of computation that was first introduced by Karchmer and Wigderson~\cite{KW93}. This tool has been used for designing quantum algorithms by \v{S}palek and Reichardt~\cite{RS12}. Later,  Reichardt~\cite{Rei09} used span programs for designing quantum query algorithms.

 A \textit{span program} $P$ evaluating a binary function ($\ell=m=2$) function $f:D_f\rightarrow\{0,1\}$ with $D_f\subseteq \{0,1\}^n$ consists of
\begin{itemize}
\item a finite-dimensional inner product space $\mathbb{C}^d$,
\item a non-zero target vector $|t\rangle\in \mathbb{C}^d$,
\item and input vector sets $I_{j,q}\subseteq\mathbb{C}^d$, for all $1\leq j\leq n$ and $q\in \{0,1\}$.
\end{itemize}
Given this data, we define $I\subseteq\mathbb{C}^d$ by
$$I=\bigcup_{j=1}^n \bigcup_{q\in \{0,1\}} I_{j,q}.$$ 
Also, for $x\in D_f$ we define the set of \emph{available vectors} by
$$I(x)=\bigcup_{j=1}^n I_{j,x_j}.$$
Thus vectors in $I_{j,q}$ would be available when the $j$-th coordinate of the input $x$ is equal to $q$.
Moreover, we let $A\in \mathbb{C}^{|I|\times d}$ be the matrix consisting of column vectors in $I$ (see Figure~\ref{fig:Aspanprogram}).

We say that the span program $P$ \emph{evaluates} the function $f$ whenever $f(x)=1$ if and only if $|t\rangle \in \text{span} I(x)$:
\begin{equation*}
 |t\rangle \in \text{span}\, I(x) \quad\text{ iff } \quad x\in f^{-1}(1).
\end{equation*}
Then for every $x\in f^{-1}(x)$ there exists $\ket{w_x}\in \mathbb C^{|I|}$, called a \emph{positive witness} for $x$, such that the coordinates of $\ket{w_x}$ associated to unavailable vectors are zero, and $A\ket{w_x} = \ket t$. Indeed, $\ket{w_x}$ witnesses the fact that $\ket t$ belongs to $\mathrm{span}\, I(x)$. 

Also, if $x\in f^{-1}(0)$ then $\ket t\notin  \mathrm{span} \,I(x)$. Therefore, there exists a vector $\ket{\bar w_x} \in \mathbb C^{d}$ called a \emph{negative witness} such that $\langle \bar w_x | v\rangle=0$ for all $\ket v\in I(x)$ while $\langle \bar w_x| t\rangle=1$. 

Fixing a positive witness for each $x\in f^{-1}(1)$ and a negative witness for each $x\in f^{-1}(0)$ we denote their collection by $w$ and $\bar w$ respectively. Then we define the complexity of $(P, w, \bar w)$ by
\begin{align}\label{eq:wsize-b-q-0}
\wsize(P, w, \bar w) = \max \Big\{ \max_{x\in f^{-1}(1)}\big\| \ket{w_x}\big\| ^2, \max_{x\in f^{-1}(0)} \big\|  A^\dagger \ket{\bar{w}_x}\big\| ^2\Big\}.
\end{align}

In a span program we sometimes also have a set of \emph{free vectors} $I_{\text{free}}\subseteq \mathbb C^d$ that are always available. That is, we let 
$$I(x)=\Big(\bigcup_{j=1}^n I_{j,x_j}\Big)\cup I_{\text{free}}.$$
Then if $x\in f^{-1}(x)$ in writing the target vector $\ket t$ as a linear combination of available vectors, we can of course use elements of $I_{\text{free}}$ as well. Yet, since these vectors and all of whose linear combinations are freely available, we will not count their coefficients in~\eqref{eq:wsize-b-q-0}. That is, $\| \ket{w_x}\| ^2$ on the right hand side is replaced by
$$\sum_{\ket v\in\bigcup_{j=1}^n I_{j,x_j} } |\bra v w_x\rangle|^2.$$
In the case where $x\in f^{-1}(0)$, the negative witness must be orthogonal to all available vectors, and then to $I_{\text{free}}$. Thus the free vectors will not contribute to $\big\|A^\dagger \ket{\bar{w}_x}\big\| ^2$ automatically.

\begin{remark}\label{re:freevec}Including free vectors $I_{\text{free}}$ although sometimes helps in designing span programs, theoretically they do not help to improve span programs. Indeed, having a span program with free vectors, replacing the underlying vector space $\mathbb C^d$ with the quotient $\mathbb C^d/(\text{span}\, I_{\text{free}})$ and replacing each vector $\ket v\in \mathbb C^d$ with its image in the quotient ($\ket v + \text{span}\, I_{\text{free}}$), we obtain an equivalent span program with the same complexity.
\end{remark}


\begin{figure}
\centering \includegraphics[scale=1]{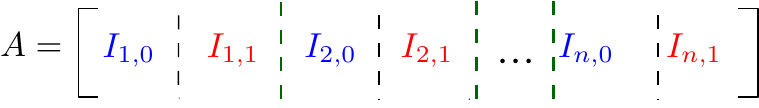}
\figcaption{Representation of the matrix $A$.}
\label{fig:Aspanprogram}
\end{figure}

Reichardt in~\cite{Rei09} showed that any feasible solution to~\eqref{eq:dual-SDP} can be transformed into a span program $(P,w,\bar{w})$ evaluating the same binary function $f:D_f \rightarrow \{0,1\}$, such that the complexity of $(P, w, \bar w)$ is equal to the objective value of~\eqref{eq:dual-SDP}. Conversely, for any span program $(P,w,\bar{w})$ there exists a feasible solution of~\eqref{eq:dual-SDP}  which has the objective value equal to the complexity of $(P, w, \bar{w})$. Therefore, designing optimal quantum query algorithms for a function $f$ with binary input and binary output ($\ell=m=2$) is equivalent to finding span programs for $f$  with the minimum complexity.

\section{Non-binary Span Program}\label{sec:NBSP}

Our main result in this paper is the generalization of the framework of span programs for non-binary functions. We show that any feasible point of the dual adversary bound corresponds to a non-binary span program and visa versa. This answers an open question first raised in~\cite{Rei09}.


A \emph{non-binary span program} (NBSP) denoted by $P$ evaluating a function $f:D_f\rightarrow[m]$ with $D_f\subseteq [\ell]^n$ consists of:
\begin{itemize}
\item a finite dimensional inner product space of the form 
$$H=H_1\oplus H_2\oplus\ldots H_n\oplus H_{\rm free},$$ 
where each $H_j$ for $1\leq j\leq n$ can be written as $H_j=H_{j,0}+\ldots+H_{j,\ell-1}$. Note that here $H_{j, q}$'s for different values of $q$ are subspaces of $H_j$ that are \emph{not necessarily orthogonal}.
\item a finite-dimensional inner product  vector space $V$,
\item $m$ non-zero vectors $|t_0 \rangle, |t_1 \rangle,\ldots ,|t_{m-1} \rangle\in V$,
\item a linear operator $A: H\to V$.
\end{itemize}
Given these data, for any $x\in D_f$, let 
$$H(x):=\bigoplus_{j\in\{1,\ldots,n\}}H_{j,x_j}\oplus H_{\rm free}\subseteq H.$$ 
Then $AH(x)$, i.e., the image of $H(x)$ under $A$, is called the \emph{space of available vectors} for $x$.
We say that $P$ evaluates $f$ if for every $x\in f^{-1}(\alpha)$ there exists 
\begin{itemize}
\item a positive witness $\ket{w_x}\in H(x)$ such that $A\ket{w_x}=\ket{t_\alpha}$ and
\item a negative witness $\ket{\bar{w}_x}\in V$ such that $\bra{\bar{w}_x}AH(x)=0$ and $\bra{\bar{w}_x}t_\beta\rangle=1-\delta_{\alpha,\beta}$ for all $\beta\in[m]$.
\end{itemize}
Here the first condition guarantees that $\ket{t_{f(x)}}$ belongs to the space of available vectors, and the second condition says that $\ket{t_\beta}$ for $\beta\neq f(x)$ does not belong to the space of available vectors. 

Now given the positive and negative witnesses for every $x\in D_f$, we define
$$\mathrm{wsize}_x(P,w_x,\bar{w}_x):=\max\big\{\|  \Pi_{\text{non-free}}\ket{w_x}\| ^2,\|  A^\dagger \ket{\bar{w}_x}  \| ^2\big\}.$$
where $\Pi_{\text{non-free}}$ is the orthogonal projection on the subspace  $H_{\text{non-free}}=\bigoplus_{j=1}^n H_j$.
Finally, the \emph{complexity} of $(P, w, \bar w)$ is defined by
\begin{align*}
 \mathrm{wsize}(P,w,\bar{w}):= \max_{x\in D_f} \; \mathrm{wsize}_x(P,w_x,\bar{w}_x). 
\end{align*}

\vspace{.2in}
We now consider a special yet important subclass of non-binary span programs. Suppose that the subspaces $H_{j, q}\subseteq H_j$ for different values of $q$ (but fixed $j$) are orthogonal to each other. In this case these subspaces are determined by orthonormal bases. Moreover, the map $A$ is determined once we fix the images of these orthonormal bases. To this end, we may fix orthonormal bases $\mathcal B_{j, q}\subset H_{j, q}$ and $\mathcal B_{\text{free}}\subset H_{\text{free}} $ and let 
\begin{align}\label{eq:NBSP-special-case}
I_{j, q} = A\mathcal B_{j, q}, \qquad I_{\text{free}} = A \mathcal B_{\text{free}},
\end{align}
as subsets of $V$. In this case we obtain the following notion of non-binary span program which we call \emph{non-binary span program with orthogonal inputs} (NBSPwOI). 


An NBSPwOI denoted by $P$ evaluating a function $f:D_f\rightarrow[m]$ with $D_f\subseteq [\ell]^n$ consists of
\begin{itemize}
\item a finite-dimensional inner product vector space $V$,
\item $m$ non-zero vectors $|t_0 \rangle, |t_1 \rangle,\ldots ,|t_{m-1} \rangle\in V$,
\item input vector sets $I_{j,q}\subseteq V$ for all $1\leq j\leq n$ and $q\in [\ell]$,
\item and a set of free vectors $I_{\text{free}}\subseteq V$.
\end{itemize}
We then define $I\subseteq V$ by 
$$I=I_{\text{free}}\cup \bigcup_{j=1}^n \bigcup_{q\in [\ell]} I_{j,q}.$$
Moreover, for each $x\in D_f$ we define the set of \emph{available vectors} by 
$$I(x)=I_{\text{free}}\cup\bigcup_{j=1}^nI_{j,x_j}.$$
That is, vectors in $I_{j, q}$ become available when the $j$-th coordinate of $x$ is $q$.
Also, as in the binary case, 
we let $A$ be the $d\times |I|$ matrix consisting of all input vectors as its columns.

We say that a non-binary span program with orthogonal inputs $P$ evaluates the function $f$ if for each $x\in D_f$ with $f(x)=\alpha\in [m]$, $\ket{t_\alpha}$ belongs to the span of the available vectors $I(x)$ and $\ket{t_{\beta}}$, for $\beta\neq \alpha$ does not belong to the span of $I(x)$. Even more, there should be two witnesses indicating these. Namely a positive witness  $\ket{w_x} \in \mathbb C^{|I|}$ and a negative witness $\ket{\bar{w}_x}\in \mathbb C^d$ satisfying the following: 
\begin{itemize}
\item First, the coordinates of $\ket{w_x}$ associated to unavailable vectors are zero. 
\item Second, $A\ket{w_x} = \ket{t_\alpha}$. 
\item Third, for all $\ket v\in I(x)$ we have $\langle v|\bar w_x\rangle =0$. 
\item Fourth, for all $\beta\neq \alpha$ we have $\langle t_{\beta}|\bar{w}_x\rangle=1$. 
\end{itemize}

The complexity of the non-binary span program with orthogonal inputs $P$ together with the collection $w$ and $\bar w$ of positive and negative witnesses is defined similar to that of NBSP. For every $x\in D_f$ we define
$$\mathrm{wsize}_x(P,w_x,\bar{w}_x):=\max\big\{\|  \Pi_{\text{non-free}}\ket{w_x}\| ^2,\|  A^\dagger \ket{\bar{w}_x}\| ^2\big\},$$
where $\Pi_{\text{non-free}}$ is the projection on the coordinates associated to non-free vectors.
Next the complexity of $(P, w, \bar w)$ equals
\begin{align*}
 \mathrm{wsize}(P,w,\bar{w}):= \max_{x\in D_f} \; \mathrm{wsize}_x(P,w_x,\bar{w}_x). 
\end{align*}

\vspace{.2in}
As mentioned before, NBSPwOI is a special case of NBSP when the subspaces $H_{j, q}$ for different values of $q$, are orthogonal to each other, and the correspondence between these two definitions is given by~\eqref{eq:NBSP-special-case} where $\mathcal B_{j, q}\subset H_{j, q}$ and $\mathcal B_{\text{free}}\subset H_{\text{free}} $ are orthonormal bases.


Some remarks are in line regarding the  definition of non-binary span program:

\begin{remark}
Observe that in the non-binary span program we need a positive witness and a negative witness for every $x\in D_f$. This is unlike the binary case in which we need a positive witness only for $x\in f^{-1}(1)$ and a negative witness only for $x\in f^{-1}(0)$. Thus it is not immediate from the definitions that our non-binary span program is a generalization of the binary one. We will show this fact in Lemma~\ref{lem:binary-nonbinary-equiv} below. 
\end{remark}

\begin{remark}
Notice that $\ket{\bar{w}_x}$ is independent of $\beta$, i.e., we have a single negative witness $\ket{\bar{w}_x}$ indicating that $\ket{t_{\beta}}$ does not belong to the span of available vectors for all $\beta\neq \alpha=f(x)$. This is stronger than saying that $\ket{t_{\beta}}$ for $\beta\neq \alpha$ is not in the span of available vectors.
\end{remark}

\begin{remark}\label{rem:all-beta}
The fact that the negative witness is independent of $\beta\neq \alpha=f(x)$ mentioned above, may seem to be a very strong condition. Nevertheless, we argue that this is the right definition. 
It is shown in~\cite{Rei09} that any quantum query algorithm for a binary function with \emph{one-sided error} can be converted to a span program. If we adopt the same techniques and generalize it for functions with non-binary input/output, the resulting span program satisfies this condition and matches our definition. For details see Appendix~\ref{app:qa2non-sp}. Later we will give another argument for this choice of definition.
\end{remark}


\begin{lemma}\label{lem:binary-nonbinary-equiv}
The non-binary span program with orthogonal inputs, when restricted to binary functions, is a generalization of Reichardt's span program described in Subsection~\ref{sec:SP}.
\end{lemma}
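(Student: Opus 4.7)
The plan is to establish the lemma by a two-way conversion between Reichardt's span program for a binary $f$ and a non-binary span program with orthogonal inputs with $\ell = m = 2$, with complexities matching up to a constant factor. The direction from NBSPwOI to Reichardt SP is immediate: given NBSPwOI data $(V, \ket{t_0}, \ket{t_1}, \{I_{j,q}\}_{q \in \{0,1\}}, I_{\text{free}})$ with its witnesses, one defines a Reichardt SP on the quotient $V/\Span I_{\text{free}}$ (using Remark~\ref{re:freevec}), with target $\ket{t} := \ket{t_1}$ and the same non-free input vectors $I_{j, q}$. The NBSPwOI positive witness for $x \in f^{-1}(1)$, after projecting away the free-vector coordinates, yields a Reichardt positive witness of size $\|\Pi_{\text{non-free}} \ket{w_x}\|^2$, and the NBSPwOI negative witness for $x \in f^{-1}(0)$ automatically satisfies $\bar w_x \perp I(x)$ and $\langle t_1 | \bar w_x\rangle = 1$, which is exactly Reichardt's negative witness condition, with cost $\|A^\dagger \bar w_x\|^2$ unchanged since the free-vector entries of $A^\dagger \bar w_x$ are already zero by orthogonality. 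The resulting $\wsize$ matches entry by entry.

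The converse direction is more substantive because an NBSPwOI requires witnesses absent from a Reichardt SP: a positive witness constructing $\ket{t_0}$ at every $x \in f^{-1}(0)$, and a negative witness separating $\ket{t_0}$ at every $x \in f^{-1}(1)$. These are exactly the witnesses of a Reichardt SP for the complement function $1 - f$. The plan is therefore to first produce a Reichardt SP $P^* = (V^*, \ket{t^*}, \{I^*_{j, q}\})$ for $1 - f$ of complexity within a constant factor of $\wsize(P)$, and then to combine $P$ and $P^*$ by direct sum. Concretely, set $V' = V \oplus V^*$, $\ket{t_1} = (\ket{t}, 0)$, $\ket{t_0} = (0, \ket{t^*})$, $I'_{j, q} = \{(\ket{v}, 0) : \ket{v} \in I_{j, q}\} \cup \{(0, \ket{u}) : \ket{u} \in I^*_{j, q}\}$, and $I'_{\text{free}} = \emptyset$. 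For $x \in f^{-1}(1)$, the positive witness lifts $P$'s positive witness into the first copy and the negative witness lifts $P^*$'s negative witness (which exists because $(1-f)(x) = 0$) into the second copy, with $\langle t_0 | (0, \bar w^*_x)\rangle = \langle t^* | \bar w^*_x\rangle = 1$; the case $x \in f^{-1}(0)$ is symmetric. A direct verification shows $\wsize_x \leq \max\{\wsize(P), \wsize(P^*)\}$ for every $x$.

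The main obstacle is producing $P^*$ from $P$ without blowing up the complexity. The cleanest route is through the dual adversary SDP~\eqref{eq:dual-SDP}: by Reichardt's theorem recalled in Subsection~\ref{sec:SP}, the SP $P$ yields a feasible point of~\eqref{eq:dual-SDP} for $f$ with objective equal to $\wsize(P)$; the constraints of~\eqref{eq:dual-SDP} involve only pairs $x, y$ with $f(x) \neq f(y)$, a condition invariant under the swap $f \leftrightarrow 1 - f$, so the very same feasible point certifies the same objective for $1-f$; applying Reichardt's theorem in the converse direction converts it into the desired $P^*$. With $P^*$ in hand, the direct-sum NBSPwOI above has $\wsize \leq O(\wsize(P))$, completing the proof.
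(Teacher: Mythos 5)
Your proposal is correct and follows essentially the same route as the paper: both construct the NBSPwOI as a direct sum of a Reichardt span program for $f$ and one for $\bar f = 1-f$, cross-assigning positive and negative witnesses between the two summands. The only difference is how the span program for $\bar f$ is obtained — you route through the dual adversary SDP (whose constraints are invariant under $f \leftrightarrow 1-f$), while the paper invokes Reichardt's explicit complementation lemma (Lemma 4.1 of~\cite{Rei09}), which gives exactly the same complexity rather than the same up to a constant; the paper itself notes your SDP-based route as a valid alternative.
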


\begin{proof}
We need to show that for any binary span program $P$ for a function $f:D_f\to \{0,1\}$, with $D_f\subseteq \{0,1\}^n$ as defined in Subsection~\ref{sec:SP}, there exists a corresponding non-binary span program with the same complexity. 

Let $P$ be a binary span program for $f$. Reichardt in Lemma 4.1 of~\cite{Rei09} has shown that associated to $P$ there exists a binary span program $P'$ for the function $\bar f=1-f$ with the \emph{same complexity}. This fact can also be concluded using the symmetry in the definition of the quantum query complexities of $f$ and $\bar f$ and the fact that they are characterized by binary span programs, yet there is an explicit construction in~\cite{Rei09} for such a span program $P'$. Having these two binary span programs $P$ and $P'$, we construct a \emph{non-binary} span program for $f$.

Let the binary span program $P$ be determined by the vector space $V$, subsets $I_{j, q}\subseteq V$, for $1\leq j\leq n$ and $q\in \{0,1\}$, target vector $\ket t$ and positive and negative witnesses $\ket {w_x}$ and $\ket{\bar w_y}$ for $x\in f^{-1}(1)$ and $y\in f^{-1}(0)$ respectively. Also, denote these parameters for $P'$ by $V'$, $I'_{j, q}$, etc. 
Let the vector space of our non-binary span program be $V''= V\oplus V'$ and consider the natural embedding of $V, V'$ in $V''$. Define $I''_{j, q} = I_{j, q}\cup I'_{j, q}$ where by abuse of notation we consider $I_{j, q}$ and $I'_{j, q}$ as subsets of $V''$ as well. Also we let $\ket{t_1} = \ket t$ and $\ket{t_0}=\ket{t'}$ be the target vectors of the non-binary span program. Finally for any $x\in f^{-1}(1)$ we let
$$\ket{w''_x} = \ket{w_x}, \qquad \ket{\bar{w}''_x} = \ket{\bar w'_x},$$
and for any $x\in f^{-1}(0)$ we let 
$$\ket{w''_x} = \ket{w'_x}, \qquad \ket{\bar{w}''_x} = \ket{\bar w_x}.$$
Then it is easy to verify that these define a valid non-binary span program with orthogonal inputs evaluating $f$ with the same complexity as that of $P$.
\end{proof}


The next step is to show that NBSP characterizes the quantum query complexity of non-binary functions. 
 In order to do so, as in the case of span program for binary functions~\cite{Rei09}, we need to define \emph{canonical} non-binary span programs.

\begin{definition}[Canonical NBSP] 
A canonical non-binary span program $(P,w,\bar{w})$ evaluating $f: D_f\rightarrow[m]$ with $D_f\subseteq [\ell]^n$ is a non-binary span program satisfying the followings:
\begin{itemize}
\item  We have $\dim V=|D_f|$ and an orthonormal basis for $V$ can be indexed by elements of $D_f$. We denote this orthonormal basis by $\big\{\ket{e_x}:\, x\in D_f\big\}$. 
\item The target vector $\ket{t_\alpha}$ for every $\alpha\in [m]$ is given by 
$$|t_\alpha\rangle=\sum_{y: f(y)\neq \alpha} |e_y\rangle.$$ 
\item For all $x\in D_f$ the negative witness $\ket{\bar{w}_x}$ equals $\ket{e_x}$.
\item $ H_{\text{\rm free}} =\{0\}$
\end{itemize}
\end{definition}
Now we show that every non-binary span program has an equivalent canonical span program.

\begin{proposition}\label{pro:SpanProgram2can}
For any non-binary span program $(P,w,\bar{w})$ evaluating $f: D_f\rightarrow[m]$ with $D_f\subseteq [\ell]^n$ there exists a \emph{canonical} non-binary span program $(P',w',\bar{w}')$ evaluating the same function with the same complexity.
\end{proposition}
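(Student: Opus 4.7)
The plan is to use the given negative witnesses of $P$ as a change of coordinates on the image space, which straightens the target vectors into the canonical form while leaving the $\wsize$ invariant.

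Concretely, I would build $(P',w',\bar{w}')$ as follows. Let $V'$ be the inner product space with orthonormal basis $\{\ket{e_x} : x\in D_f\}$, and \emph{define} the canonical targets $\ket{t'_\alpha} := \sum_{y:\, f(y)\neq\alpha}\ket{e_y}$. Take $H' := H_{\text{non-free}} = H_1\oplus\cdots\oplus H_n$ with the same subspaces $H'_{j,q} := H_{j,q}$, and $H'_{\text{free}} := \{0\}$. Define $A' : H' \to V'$ by specifying its matrix elements in the new basis,
$$\bra{e_x} A' \ket{h} := \bra{\bar{w}_x}A\ket{h}, \qquad \ket{h}\in H',$$
and choose the witnesses
$$\ket{w'_x} := \Pi_{\text{non-free}}\ket{w_x}, \qquad \ket{\bar{w}'_x} := \ket{e_x}.$$
By construction $P'$ has $\dim V' = |D_f|$, canonical targets, canonical negative witnesses, and no free subspace, so it is canonical as soon as it is shown to evaluate $f$.

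The verification has three parts. For the positive witness condition $A'\ket{w'_x}=\ket{t'_{f(x)}}$, one expands $\bra{e_y}A'\ket{w'_x} = \bra{\bar{w}_y}A\Pi_{\text{non-free}}\ket{w_x}$ and splits $\ket{w_x} = \Pi_{\text{non-free}}\ket{w_x} + \Pi_{\text{free}}\ket{w_x}$. The full piece contributes $\bra{\bar{w}_y}t_{f(x)}\rangle = 1-\delta_{f(y),f(x)}$, and the free piece contributes $0$ because $\Pi_{\text{free}}\ket{w_x}\in H_{\text{free}}\subseteq H(y)$ and $\bra{\bar{w}_y}AH(y)=0$. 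For the negative witness, $\bra{\bar{w}'_x}A'H'(x)=0$ follows from $H'(x)\subseteq H(x)$ together with $\bra{\bar{w}_x}AH(x)=0$, while $\bra{\bar{w}'_x}t'_\beta\rangle = 1-\delta_{f(x),\beta}$ holds by the definition of $\ket{t'_\beta}$. For the $\wsize$, the positive side is $\|\Pi'_{\text{non-free}}\ket{w'_x}\|^2 = \|\Pi_{\text{non-free}}\ket{w_x}\|^2$ tautologically. On the negative side, the crucial observation is that $\bra{\bar{w}_x}AH_{\text{free}}\subseteq \bra{\bar{w}_x}AH(x)=0$ already forces $A^\dagger\ket{\bar{w}_x}\in H_{\text{non-free}}$; a short adjoint calculation then yields $A'^\dagger\ket{e_x} = A^\dagger\ket{\bar{w}_x}$, so $\|A'^\dagger\ket{\bar{w}'_x}\|^2=\|A^\dagger\ket{\bar{w}_x}\|^2$.

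I expect the main (mild) obstacle to be the bookkeeping around the free subspace and the potential non-orthogonality of the $H_{j,q}$'s inside $H_j$, together with not abusing notation between $H$ and its subspace $H'$. Both difficulties collapse under the single observation above: the defining condition $\bra{\bar{w}_x}AH(x)=0$ automatically projects $A^\dagger\ket{\bar{w}_x}$ into $H_{\text{non-free}}$, so amputating the free part and restricting $A$ to $H'$ is lossless for both the wsize and the positive-witness identity. Once this point is made precise, the four canonical conditions and the wsize equality follow directly.
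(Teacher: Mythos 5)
Your proposal is correct and is essentially the paper's own proof: defining $A'$ by $\bra{e_x}A'\ket{h}=\bra{\bar w_x}A\ket{h}$ is exactly the paper's conjugation $A'=BA$ with $B=\sum_y\ket{e_y}\bra{\bar w_y}$, with the same choices $\ket{w'_x}=\Pi_{\text{non-free}}\ket{w_x}$ and $\ket{\bar w'_x}=\ket{e_x}$. Your explicit observation that $\bra{\bar w_x}AH_{\text{free}}=0$ forces $A^\dagger\ket{\bar w_x}\in H_{\text{non-free}}$ is a slightly more careful rendering of the paper's (somewhat terse) negative-witness-size computation, but the argument is the same.
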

\begin{proof}
Consider the linear transformation $B:V\rightarrow\mathbb{C}^{|D_f|}$ given by 
$$B=\sum_{y\in D_f} |e_y\rangle \langle \bar{w}_y|.$$ 
We obtain $P'$ by transforming the operator $A$ and vectors $\ket{t_\alpha}$'s using $B$. That is,  we let $H'_{j, q} = H_{j, q}$ and $H'=\bigoplus_j H'_j$.  Then, $A': H'\to C^{|D_f|}$ is given by $A'=BA$. Note that by definition $\ket{\bar w_x}$, for all $x$, is orthogonal to $AH_{\text{free}}$. Thus $BAH_{\text{free}} = \{0\}$, and with no loss of generality we can assume that $H'_{\text{free}}=\{0\}$. Next, the target vectors $|t'_\alpha\rangle$ of $P'$ are
\begin{equation*}
\ket{t'_\alpha}=B\ket{t_\alpha}=\sum_{y\in D_f} \ket{e_y} \langle \bar{w}_y|t_\alpha\rangle=\sum_{y:f(y)\neq \alpha} |e_y\rangle,
\end{equation*}
as desired. Now let $\ket{w'_x} = \Pi_{\text{non-free}} \ket{w_x}$. We have $\ket{w'_x}\in H'(x)$ and because of $A\ket{w_x}=|t_\alpha\rangle$ (and that $BA H_{\text{free}} =\{0\}$) we have $A'\ket{w'_x}=B|t_\alpha\rangle = \ket{t'_\alpha}$. Therefore,  the positive witnesses of $P'$ remain the same since $\|\ket{w'_x}\|=\| \Pi_{\text{non-free}}|w_x\rangle\|$. 
Now we need to verify that $\ket{\bar w'_x}=\ket{e_x}$ are valid negative witnesses for $P'$. We have
$$\bra{\bar{w}'_x}BAH'(x)=\bra{e_x}\sum_{y\in D_f}\ket{e_y}\bra{\bar{w}_y}AH'(x)=\bra{\bar{w}_x}AH'(x)=0,$$
and also, from the definitions of $\ket{t'_\alpha}$ for every $\beta\neq \alpha=f(x)$ we have  $\langle e_x|t'_\beta\rangle=1$.
We conclude that $(P', w', \bar w')$ is a canonical non-binary span program evaluating $f$. Moreover, the complexity of $(P', w',\bar w')$ is the same as that of $(P, w, w')$.  This is because the positive witness sizes remain the same in $P'$ and 
we have
\begin{equation*}
\big\|\bra{\bar w'_x} A'\big\| ^2=\big\| \bra{e_x} BA \big\| ^2=\Big\| \bra{e_x} \sum_{y\in D_f}  \ket{e_y}\bra{\bar{w}_x}A \Big\| ^2=\big\|\bra{\bar{w_x}}  A\big\| ^2 .\qedhere
\end{equation*}
\end{proof}

\begin{remark}
In the above proof starting with a non-binary span program with orthogonal inputs, we obtain a canonical non-binary span program with orthogonal inputs. This is simply because the subspaces $H_{j, q}$ remain the same in the canonical span program and remain orthogonal for different values of $q$ if they are orthogonal in the starting NBSP. 
\end{remark}

Now we are ready to prove the main result of this section, that (canonical) non-binary span programs are equivalent to solutions of the dual adversary SDP. 

\begin{theorem}\label{thm:NBSP2ADV}
Non-binary span programs characterize the quantum query complexity of functions with non-binary input and/or output, up to a constant factor. 
\end{theorem}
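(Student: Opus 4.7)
The plan is to prove Theorem~\ref{thm:NBSP2ADV} by exhibiting a two-way translation between NBSPs and feasible solutions of the dual adversary SDP~\eqref{eq:dual-SDP}, losing at most a factor of $2$ in complexity in either direction. Combined with the theorem of Reichardt~\cite{Rei09} and Lee et al.~\cite{LMRSS11} that the optimum of~\eqref{eq:dual-SDP} equals $Q(f)$ up to a constant factor, this will yield the claim. By Proposition~\ref{pro:SpanProgram2can} I may assume that the NBSP is in canonical form throughout, so that $H_{\rm free}=\{0\}$, $V=\mathbb{C}^{|D_f|}$, $\ket{\bar w_x}=\ket{e_x}$, and $\ket{t_\alpha}=\sum_{y:f(y)\neq\alpha}\ket{e_y}$.

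For the direction \emph{NBSP to SDP solution}, given a canonical $(P,w,\bar w)$ I set
\[
\ket{u_{xj}} := \Pi_{H_j}A^\dagger\ket{e_x}, \qquad \ket{v_{xj}} := \Pi_{H_j}\ket{w_x},
\]
where $\Pi_{H_j}$ is the orthogonal projection onto the $j$-th summand of $H=\bigoplus_j H_j$. A direct computation gives $\sum_j\langle u_{xj}|v_{yj}\rangle=\langle A^\dagger e_x|w_y\rangle=\langle e_x|t_{f(y)}\rangle=1-\delta_{f(x),f(y)}$. For indices $j$ with $x_j=y_j$, the vector $\Pi_{H_j}\ket{w_y}$ lies in $H_{j,y_j}=H_{j,x_j}\subseteq H(x)$, so the negative-witness condition $\bra{e_x}AH(x)=0$ forces $\langle u_{xj}|v_{yj}\rangle=0$; hence the sum is effectively over $j$ with $x_j\neq y_j$, matching the SDP constraint. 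Since $\sum_j\|u_{xj}\|^2=\|A^\dagger\ket{e_x}\|^2$ and $\sum_j\|v_{xj}\|^2=\|\ket{w_x}\|^2$, the objective value of the constructed SDP solution is at most $\wsize(P,w,\bar w)$.

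For the reverse direction \emph{SDP solution to NBSP}, given feasible $\{\ket{u_{xj}},\ket{v_{xj}}\}\subseteq K_j$ I pass to the doubled space $H_j := K_j\oplus(K_j\otimes\mathbb{C}^\ell)$ and introduce corrected vectors
\[
\ket{\tilde u_{xj}} := \ket{u_{xj}}\oplus\big(\ket{u_{xj}}\otimes\ket{x_j}\big), \qquad \ket{\tilde v_{xj}} := \ket{v_{xj}}\oplus\big({-\ket{v_{xj}}}\otimes\ket{x_j}\big).
\]
A short check gives $\langle\tilde u_{xj}|\tilde v_{yj}\rangle=\langle u_{xj}|v_{yj}\rangle\big(1-\delta_{x_j,y_j}\big)$, so the SDP constraint becomes $\sum_j\langle\tilde u_{xj}|\tilde v_{yj}\rangle=1-\delta_{f(x),f(y)}$ with \emph{no} restriction on $j$. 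I then take the canonical data $V=\mathbb{C}^{|D_f|}$, $\ket{t_\alpha}=\sum_{y:f(y)\neq\alpha}\ket{e_y}$, $\ket{\bar w_x}=\ket{e_x}$, $A|_{H_j}\ket\psi:=\sum_y\langle\tilde u_{yj}|\psi\rangle\ket{e_y}$, and the crucial (in general non-orthogonal) subspaces
\[
H_{j,q}:=\Span\{\ket{\tilde u_{yj}}:y\in D_f,\ y_j=q\}^\perp\subseteq H_j.
\]
Then $\ket{\tilde v_{xj}}\in H_{j,x_j}$ because $\langle\tilde u_{yj}|\tilde v_{xj}\rangle$ vanishes whenever $y_j=x_j$, so $\ket{w_x}:=\bigoplus_j\ket{\tilde v_{xj}}\in H(x)$ is a positive witness with $A\ket{w_x}=\ket{t_{f(x)}}$; and $\bra{e_x}AH(x)=0$ holds because $\ket{\tilde u_{xj}}$ itself lies in $\Span\{\ket{\tilde u_{yj}}:y_j=x_j\}$, the orthogonal complement of $H_{j,x_j}$. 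The doubling inflates every relevant norm by a factor of exactly $2$, so $\wsize\leq 2\cdot(\text{SDP objective})$.

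The principal obstacle is the reverse direction. A naive construction of $A$ using the original $\ket{u_{xj}}$ would violate the negative-witness property because the SDP only guarantees $\sum_{j:x_j\neq y_j}\langle u_{xj}|v_{yj}\rangle=1-\delta_{f(x),f(y)}$, while individual ``diagonal'' contributions with $x_j=y_j$ can be arbitrarily nonzero. The doubling device attaches an auxiliary $\mathbb{C}^\ell$ register that records the coordinate value and cancels precisely those diagonal terms, at the cost of only a factor-$2$ blowup in the norms. The resulting subspaces $H_{j,q}$ are in general not mutually orthogonal for different $q$, and exploiting this non-orthogonality is exactly the extra flexibility of a general NBSP over an NBSPwOI, where one would instead be forced to pay the $\sqrt{\ell-1}$ factor announced earlier in the paper.
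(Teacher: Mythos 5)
Your proof is correct and follows the same architecture as the paper's: reduce to canonical form via Proposition~\ref{pro:SpanProgram2can}, then give a two-way translation with the dual adversary SDP~\eqref{eq:dual-SDP}. The forward direction (NBSP to SDP) is identical to the paper's --- your $\ket{u_{xj}}=\Pi_{H_j}A^\dagger\ket{e_x}$ and $\ket{v_{xj}}=\Pi_{H_j}\ket{w_x}$ are exactly the paper's $\ket{a_{x,j}}$ and $\ket{w_{x,j}}$, and the argument that the diagonal terms vanish via the negative-witness condition is the same. The only genuine difference is the gadget used in the reverse direction to kill the terms with $x_j=y_j$: the paper tensors the SDP vectors with the specific unit-overlap vectors $\ket{\mu_q},\ket{\nu_q}$ of~\eqref{eq:mu}--\eqref{eq:nu} satisfying $\langle\mu_q|\nu_p\rangle=1-\delta_{q,p}$ and defines $H_{j,q}=U\otimes\Span\{\ket j\}\otimes\ket{\nu_q}^\perp$, whereas you use the direct-sum doubling $\ket{u_{xj}}\oplus(\ket{u_{xj}}\otimes\ket{x_j})$ versus $\ket{v_{yj}}\oplus(-\ket{v_{yj}}\otimes\ket{y_j})$, which yields the same multiplicative factor $(1-\delta_{x_j,y_j})$ and leads you to take $H_{j,q}$ as the orthogonal complement of $\Span\{\ket{\tilde u_{yj}}:y_j=q\}$. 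Both are instances of the same ``$(1-\delta)$-gadget'' idea; yours is more elementary and self-contained (no need for the $\theta$-parametrized vectors of~\cite{LMRS10}) at the negligible cost of a norm blow-up of exactly $2$ rather than $2(\ell-1)/\ell$, and it makes transparent why the resulting subspaces $H_{j,q}$ are non-orthogonal, i.e., why one lands in a general NBSP rather than an NBSPwOI. One can also check (and you implicitly need) that $\sum_q H_{j,q}=H_j$ in your construction, which does hold since the spans $\Span\{\ket{\tilde u_{yj}}:y_j=q\}$ for distinct $q$ intersect trivially; this point is worth a sentence but is not a gap.
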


\begin{proof}
To prove this theorem we show that for any non-binary span program $(P,w,\bar{w})$ evaluating a function $f:D_f\rightarrow [m]$ with $D_f\subseteq [\ell]^n$, there exists a feasible solution to the dual adversary SDP~\eqref{eq:dual-SDP} with objective value being equal to the complexity of $(P, w, \bar{w})$ and vise versa.
Using Proposition~\ref{pro:SpanProgram2can} we can assume that the span program $(P,w,\bar{w})$ is canonical. Therefore $A:\bigoplus_j H_j\to V$ with $\dim V=|D_f|$ can be written as 
$$A=\sum_{j,x} \ket{e_x}\bra{a_{x,j}},$$
with $\ket{a_{x, j}}\in H_j$. Moreover,  for every $x\in D_f$ we have $\ket{w_x}=\oplus_j\ket{w_{x, j}}$ with $\ket{w_{x, j}}\in H_{j, x_j}$ and $A\ket{w_x} = \sum_{y: f(y)\neq f(x)} \ket{e_y}$. Now define 
\begin{equation}\label{eq:feasible-sdp-00}
\ket{u_{x,j}}=\ket{a_{x,j}}, \qquad \ket{v_{x,j}}=\ket{w_{x,j}} \qquad \forall x\in D_f, j\in \{1,\ldots,n\}.
\end{equation}
Let $x,y\in D_f$ and $j_0\in \{1,\dots, n\}$ be such that $x_{j_0}=y_{j_0}$. Then $\ket{w_{x,{j_0}}}\in H_{{j_0}, x_{j_0}}\subseteq H(y)$ and $\ket{\bar w_y} = \ket{e_y}$ is orthogonal $AH(y)$. This means that 
$\bra{e_y} A\ket{w_{x,{j_0}}}=0$ and equivalently $\bra {a_{y,{j_0}}} w_{x, {j_0}}\rangle=0$. Therefore,
\begin{equation*}
\sum_{j:x_j\neq y_j}\bra{u_{y,j}}v_{x,j}\rangle=\sum_{j: x_j\neq y_j} \bra{a_{y,j}} w_{x,j}\rangle  =\sum_{j=1}^n \bra{a_{y,j}} w_{x,j}\rangle = \bra{e_y} A\ket{w_x}=1-\delta_{f(x),f(y)},
\end{equation*}
where we used the fact that $\bra{a_{y,j}}w_{x,j}\rangle=0$ for all $j$ with $x_j=y_j$. Thus~\eqref{eq:feasible-sdp-00} forms a feasible solution of the SDP~\eqref{eq:dual-SDP}.  Moreover,  it is easy to verify that the objective value of the SDP for this feasible solution coincides with the complexity of the non-binary span program.

For the other direction, we make use of two sets of vectors
$\{\ket{\mu_q}:\, q\in[\ell] \}$ and $\{\ket{\nu_q}:\, q\in [ \ell]\}$ in $\mathbb C^{\ell}$ first appeared in~\cite{LMRS10}:
\begin{align}\label{eq:mu}
&\ket{\mu_q}=\sqrt{\frac{2(\ell-1)}{\ell}}\left(-\theta\ket{q}+\frac{\sqrt{1-\theta^2}}{\sqrt{\ell-1}}\sum_{p\neq q}\ket{p}\right), \\ \label{eq:nu}
&\ket{\nu_q}=\sqrt{\frac{2(\ell-1)}{\ell}}\left(\sqrt{1-\theta^2}\ket{q}+\frac{\theta}{\sqrt{\ell-1}}\sum_{p\neq q}\ket{p}\right),
\end{align}
where $\theta=\sqrt{\frac12-\frac{\sqrt{\ell-1}}{\ell}}$. These vectors have the property that $\|\ket{\mu_q}\|^2=\|\ket{\nu_q}\|^2=\frac{2(\ell-1)}{\ell}\leq 2$ for all $q$ and we have
$$\bra {\mu_q} \nu_p \rangle=1-\delta_{q, p}.$$

Let the vectors $\ket{u_{x,j}},\ket{v_{x,j}}\in U$ form a feasible solution of the dual adversary SDP~\eqref{eq:dual-SDP}. Define a non-binary span program as follows: 
\begin{itemize}
\item   $H=U\otimes \mathbb{C}^n\otimes \mathbb{C}^\ell$ and $H_j = U\otimes \Span \{\ket j\}\otimes \mathbb C^\ell$ for $j=1, \dots, n$,
\item  $H_{j,q}=U\otimes \Span\{\ket{j}\}\otimes \ket{\nu_q}^\perp \subseteq H_j$  for every $1\leq j\leq n$ and $q\in[\ell]$, where $\ket{\nu_q}^\perp\subset \mathbb C^\ell$ is the subspace of vectors  orthogonal to $\ket{\nu_q}$,
\item  $V={\rm span}\{\ket{e_x}:x\in D_f\}$,
\item $\ket{t_\alpha }=\sum_{x:f(x)\neq \alpha} \ket{e_x}$,
\item and $A: H\to V$ is given by 
$$A\ket{u,j,q}=\sum_{y\in D_f}\bra{ v_{y,j}}u\rangle\bra{\nu_{y_j}}q\rangle\ket{e_y}.$$
\end{itemize}
Now for every $x\in D_f$, let $\ket{w_x}=\sum_j\ket{u_{x,j},j,\mu_{x_j}}$. We have
\begin{equation*}
A\ket{w_x}=\sum_{y\in D_f}  \sum_{j=1}^n\bra{v_{y,j}}u_{x,j}\rangle\bra{\nu_{y_j}}\mu_{x_j}\rangle\ket{e_y}=\sum_{y\in D_f}  \Big(\sum_{j:x_j\neq y_j}\bra{v_{y,j}}u_{x,j}\rangle\Big) \ket{e_y}=\ket{t_{f(x)}},
\end{equation*}
where for the second equality we use $\bra{\nu_{q}}\mu_{q'}\rangle = 1-\delta_{q, q'}$.
Next, let $\ket{\bar{w}_x}=\ket{e_x}$, then $\bra{\bar{w}_x}t_\alpha\rangle=1-\delta_{f(x),\alpha}$ and for every $z\in\ket{\nu_{x_j}}^\perp$ we have
\begin{equation*}
\bra{\bar{w}_x}A\ket{u,j,z}=\bra{v_{x,j}}u\rangle\left.\bra{\nu_{x_j}}z\right\rangle=0.
\end{equation*}
Therefor, $\bra{e_x} AH(x)=0$ and this non-binary span program evaluates $f$.
Finally, the complexity of this NBSP is given by 
\begin{align*}
&\max_{x\in D_f}\; \max\big\{\big\| A^\dagger \ket{\bar{w}_x} \big\| ^2, \big\| |w_x \rangle \big\| ^2\big\},\\
=&\max_{x\in D_f}\max\left\{\sum_{j=1}^n \left\|\ket{v_{x,j}}\ket{\nu_{x_j}}\right\|^2, \sum_{j=1}^n\left\|\ket{u_{x,j}}\ket{\mu_{x_j}}\right\|^2  \right\}\\
\leq& 2 \max_{x\in D_f}\max\left\{\sum_{j=1}^n\|\ket{u_{xj}}\|^2,\sum_{j=1}^n\|\ket{v_{xj}}\|^2\right\}
\end{align*}
which is equal to the objective value of the dual adversary SDP~\eqref{eq:dual-SDP} up to a constant factor.
\end{proof}

In the above proof we transform any NBSP, and in particular any NBSPwOI to a feasible solution of the dual adversary SDP. However, transforming a feasible solution of the dual adversary SDP to an NBSP, the resulting non-binary span program is \emph{not} an NBSPwOI. In the following proposition we see that we can indeed transform a feasible solution of the dual adversary SDP to an NBSPwOI, but by losing a factor of $\sqrt{\ell-1}$.


\begin{proposition} \label{pro:SDP2SPwob}
Any feasible solution of the dual adversary SDP~\eqref{eq:dual-SDP} can be transformed to a canonical non-binary span program \emph{with orthogonal inputs} $(P,w,\bar{w})$ evaluating the same function $f:D_f\rightarrow [m]$ such that 
the complexity of $(P, w, \bar{w})$ is equal to $\sqrt{\ell-1}$ times the objective value of the dual adversary SDP. 
\end{proposition}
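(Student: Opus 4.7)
The plan is to adapt the reverse direction of the proof of Theorem \ref{thm:NBSP2ADV} by replacing its non-orthogonal construction $H_{j,q}=U\otimes\Span\{\ket j\}\otimes \ket{\nu_q}^\perp$ with a manifestly orthogonal one. Concretely, I would take $H = U\otimes \mathbb{C}^n\otimes \mathbb{C}^\ell$ and $H_{j,q}=U\otimes \Span\{\ket j\}\otimes \Span\{\ket q\}$, so that the subspaces $H_{j,q}$ for distinct $q$ are orthogonal by construction. This dispenses with the auxiliary vectors $\ket{\mu_q},\ket{\nu_q}$ from \eqref{eq:mu}--\eqref{eq:nu} entirely, at the cost of a penalty to be absorbed by a suitable rescaling.

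First I would set up the canonical data $V=\Span\{\ket{e_x}:x\in D_f\}$, targets $\ket{t_\alpha}=\sum_{x:f(x)\neq\alpha}\ket{e_x}$, and negative witnesses $\ket{\bar w_x}=\ket{e_x}$. Given a feasible solution $\{\ket{u_{x,j}},\ket{v_{x,j}}\}$ of \eqref{eq:dual-SDP}, define
$$A\ket{u,j,q}=\lambda\sum_{y:\,y_j\neq q}\langle v_{y,j}|u\rangle\,\ket{e_y},$$
for a scalar $\lambda>0$ to be fixed, together with positive witnesses $\ket{w_x}=\lambda^{-1}\sum_j\ket{u_{x,j}}\otimes\ket{j}\otimes\ket{x_j}\in\bigoplus_j H_{j,x_j}$. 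A direct computation using the SDP identity $\sum_{j:x_j\neq y_j}\langle u_{x,j}|v_{y,j}\rangle=1-\delta_{f(x),f(y)}$ then gives $A\ket{w_x}=\ket{t_{f(x)}}$. Moreover, $\bra{e_x}AH_{j,x_j}=0$ holds automatically because each term of $A\ket{u,j,x_j}$ is supported on some $y$ with $y_j\neq x_j$, hence $y\neq x$. This shows that the data defines a canonical NBSPwOI evaluating $f$.

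The crux is the complexity calculation, which is where the factor $\sqrt{\ell-1}$ enters. The positive witness contributes $\|\Pi_{\text{non-free}}\ket{w_x}\|^2=\lambda^{-2}\sum_j\|\ket{u_{x,j}}\|^2$. For the negative witness, the orthogonality of the $H_{j,q}$'s is precisely what costs us: $\ket{e_x}$ receives a contribution from every pair $(j,q)$ with $q\neq x_j$, so $A^\dagger\ket{e_x}=\lambda\sum_j\sum_{q\neq x_j}\ket{v_{x,j}}\otimes\ket{j}\otimes\ket{q}$ and therefore $\|A^\dagger\ket{e_x}\|^2=\lambda^2(\ell-1)\sum_j\|\ket{v_{x,j}}\|^2$. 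Choosing $\lambda^2=(\ell-1)^{-1/2}$ balances the two sides and produces $\wsize(P,w,\bar w)=\sqrt{\ell-1}\cdot\max_x\max\{\sum_j\|\ket{u_{x,j}}\|^2,\sum_j\|\ket{v_{x,j}}\|^2\}$, matching the claim.

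The main obstacle is conceptual rather than technical: one must see why the orthogonality requirement forces a blow-up of $\ell-1$ before rescaling. In the proof of Theorem \ref{thm:NBSP2ADV}, the codimension-one subspace $\ket{\nu_q}^\perp$ packs every contribution with $x_j\neq q$ into a single slot, whereas mutually orthogonal $H_{j,q}$'s duplicate the $\ket{v_{x,j}}$ contribution across $\ell-1$ separate slots. The scalar $\lambda$ then symmetrizes this $\ell-1$ penalty evenly between positive and negative witness sizes, leaving the advertised $\sqrt{\ell-1}$ factor on both sides.
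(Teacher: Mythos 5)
Your construction is, up to swapping the roles of the $\ket{u_{x,j}}$'s and $\ket{v_{x,j}}$'s (harmless, by the symmetry of the SDP constraint) and a change of notation from explicit matrix blocks to tensor-product subspaces $H_{j,q}=U\otimes\Span\{\ket{j}\}\otimes\Span\{\ket{q}\}$, exactly the paper's proof: the $(\ell-1)$-fold duplication across the blocks with $q\neq x_j$ and the balancing rescaling $\lambda^2=(\ell-1)^{-1/2}$ (the paper's $\gamma^{-2}$) are identical. The argument is correct and complete.
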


See Appendix~\ref{app:SDP2SPwob} for a proof of this proposition.

\begin{remark}\label{rem:NBSP-vs-wOI}
The definition of NBSPwOI is more intuitive comparing to general NBSPs and consistent with its binary counterpart, yet as the above proposition suggests and will be shown later in an example, NBSPwOI is optimal only up to a factor of $\sqrt{\ell-1}$. However, when $\ell$ is a constant, it is more reasonable to think of NBSPwOI rather than general NBSPs. Moreover, sometimes getting ideas from a non-optimal NBSPwOI, we can design an NBSP that is optimal. Later, we will see examples of such non-binary span programs.

\end{remark}

\begin{remark} \label{rem:gamma-SP-complexity}
For any non-binary span program $(P, w, \bar w)$ by scaling the associated map $A$ and positive witnesses $\ket{w_x}$ by factors $\gamma^{-1}$ and $\gamma$ respectively (but leaving the target vectors and negative witnesses unchanged), we get another non-binary span program $(P', w', \bar w')$ for the same function. The complexity of this new span program equals 
\begin{align*}
\wsize(P', w', \bar w') & = \max_x \max\big\{   \gamma^2 \big\|  \ket{w_x}\big \| ^2, \gamma^{-2} \big\|  A^\dagger\ket{\bar{w}_x}\big\| ^2\big\}\\
& = \max \Big\{ \gamma^2 \max_x \big\|  \ket{w_x}\big \| ^2,\, \gamma^{-2}\max_x    \big\|  A^\dagger\ket{\bar{w}_x}\big\| ^2   \Big\}
\end{align*}
Then letting 
$$\gamma^2 = \frac{ \max_x\big\|  A^\dagger\ket{\bar{w}_x}\big\| }{\max_x\big\|  \ket{w_x}\big \|},$$
we obtain 
$$\wsize(P', w', \bar w') = \big(\max_x\big\|  \ket{w_x}\big \|\big)\big(\max_x\big\|  A^\dagger\ket{\bar{w}_x}\big\|\big).$$
Thus we may define the \emph{positive} and \emph{negative witness sizes} as 
$$\wsize_+(P, w) : = \max_x\big\|  \ket{w_x}\big \|^2, \qquad \quad \wsize_-(P, \bar w) : = \max_x\big\|  A^\dagger\ket{\bar{w}_x}\big\|^2,$$
and let the complexity of the span program to be 
$$\sqrt{\wsize_+(P, w) \cdot \wsize_-(P, \bar w)}\,.$$
See the proof of Proposition~\ref{pro:SDP2SPwob} for an example.
\end{remark}

\begin{remark} 
One may suggest that if we convert a non-binary span program directly to a quantum algorithm (and not to a solution of the dual adversary bound as we did in Proposition~\ref{pro:SDP2SPwob}) we can relax the condition that for all $\beta\neq f(x)$ we have $\bra{\bar{w_x}}t_\beta\rangle=1$ (see also Remark~\ref{rem:all-beta}).
This is based on the same techniques that has been used by Reichardt~\cite{Rei09} in the binary case\footnote{ The algorithm is just doing phase estimation on the product of two reflections, one of them around the kernel of $[\ket{t_0}\ldots \ket{t_{m-1}}|A]$ and the other one around $\Pi_x=1-\sum_{j\in [n]}\sum_{a \neq x_j}\ket{j,a}\bra{j,a}$.} . Nevertheless, in this proof we again see that this strong condition must be satisfied in order to proof go through. 
\end{remark}


\begin{example}\label{ex:id}
Let $D$ be the subset of $[\ell]^n$ containing those $x\in [\ell]^n$ that have at most one non-zero coordinate. Let $f: D\to D$ be the identity function on $D$. Then the optimal non-binary span program with orthogonal inputs evaluating this function has complexity $\Theta\big(\sqrt{(\ell-1)n}\big)$.
\end{example}


Note that the function in this example is a non-binary generalization of the OR function, and using the Grover search algorithm its quantum query complexity is $\Theta(\sqrt n)$. Thus, this example shows that the undesirable $\sqrt{\ell-1}$ factor in the Proposition~\ref{pro:SDP2SPwob} is necessary and cannot be improved. 


\begin{proof}
We first present a non-binary span program with orthogonal inputs for $f$ with complexity $O(\sqrt{(\ell-1)n})$ and then prove its optimality. 

For simplicity of notation let us index elements of $D$ by $\{(0)\}\cup\{(j, q):\, 1\leq j\leq n, 0\neq q\in [\ell]\}$. Thus $x_{(0)} = (0, \dots, 0)$ and $x_{(j, q)}$ is the sole element of $D$ whose $j$-th coordinate is $q\neq 0$. Here is the span program with orthogonal inputs:
\begin{itemize}
\item An orthonormal basis for the vector space of our span program is $\{\ket{j, q}:\, 1\leq j\leq n, q\in [\ell]\}$.
\item   $I_{j, q} = \{\ket{j, q}\}$. 
\item The target vectors are 
 $$\ket{t_{(0)}}= a  \sum_{i=1}^n \ket{i, 0},$$
 and
 $$\ket{t_{(j, q)}} = b\sum_{i:i\neq j} \ket{i, 0}  + c\ket{j, q}, \qquad \forall j, q\neq 0,$$
 where $a, b, c>0$ are such that $a^2= \sqrt{(\ell-1)/n}$, $b^2 = \sqrt{(\ell-1)n}/(n-1)$ and $c^2=\sqrt{(\ell-1)n}$.
 \end{itemize}
Clearly, the positive witness sizes for $x_{(0)}$ and $x_{(j, q)}$ are $na^2 = \sqrt{(\ell-1)n}$ and 
$(n-1)b^2+c^2 = 2\sqrt{(\ell-1)n}$, respectively, both of which are $O(\sqrt{(\ell-1)n})$.
The negative witnesses are
$$\ket{\bar w_{(0)}} = \frac{1}{c}\sum_i \sum_{p\neq 0} \ket{i, p},$$
and
$$\ket{\bar w_{(j, q)}} = \frac{1}{a} \ket{j, 0} + \frac{1}{c} \sum_{p\notin\{0, q\}} \ket{j, p} + \frac{1-b/a}{c} \sum_{i\neq j} \sum_{p\neq 0} \ket{i, p}.$$
Clearly, $\ket{\bar w_{(0)}}$ is orthogonal to all available vectors when the input is $x= x_{(0)}$ and $\bra{\bar w_{(0)}} t_{(j, q)}\rangle =1$ for all $j$ and $q\neq 0$. Moreover, 
$$\sum_{i, p} |\langle i, p \ket{\bar w_{(0)}} |^2=(\ell-1)n/c^2 =\sqrt{(\ell-1)n}.$$
Similarly, $\ket{\bar w_{(j, q)}}$ is orthogonal to all available vectors when the input is $x_{(j, q)}$. Also $\bra{\bar w_{(j, q)}} t_{(0)}\rangle=1$, and for $p\neq q$ we have  $\bra{\bar w_{(j, q)}}  t_{(j, p)}\rangle  =1$. Moreover, for $i\neq j$ and arbitrary $p$ we have
$$\bra{\bar w_{(j, q)}} t_{(i, p)}\rangle =\frac{b}{a} + \big(1-\frac{b}{a}\big) =1.$$
Finally we have
$$\sum_{i, p} |\langle i, p \ket{\bar w_{(j, q)}} |^2= \frac{1}{a^2} + (\ell-2) \frac{1}{c^2} + (n-1)(\ell-1) \Big(  \frac{1-b/a}{c} \Big)^2 \leq 5 \sqrt{(\ell-1)n}.$$
Thus the complexity of this span program is $O(\sqrt{(\ell-1)n})$.

We now prove the optimality of this bound.  By Proposition~\ref{pro:SpanProgram2can} it suffices to consider only canonical span programs with orthogonal inputs which are determined by the set of input vectors and positive witnesses.  Let $I_{j, q} =\{\ket{v_{j,q,1}}, \ldots, \ket{v_{j,q,k_{j,q}}}\}$. Also, assume that for every $x\in D$

\begin{equation}\label{eq:tz}
\ket{t_{x}}=\sum_{y\neq x}\ket{e_y}=\sum_{i=1}^n \sum_{r=1}^{k_{i,x_i}} w_{x, i ,r}\ket{v_{i,x_i,r}},
\end{equation}
letting $\ket{w_x}$ be the vector of coefficients in the above sum, the positive witness size is equal to 
\begin{equation}\label{eq:Pz}
S_x:=\big\|\ket{w_x}\big\|^2=\sum_{i=1}^n\sum_{r=1}^{k_{i,x_i}} \big|w_{x, i,r}\big|^2.
\end{equation}
Since the negative witness equals $\ket{\bar w_x}=\ket{e_x}$, we have
\begin{equation}\label{eq:N-witness-orth}
\bra{e_x}v_{i,x_i,r}\rangle=0, \qquad \forall i, r,
\end{equation}
and the negative witness size for input $x$ is
\begin{equation}\label{eq:Nz}
\bar{S}_x:=\sum_{i,q}\sum_{r=1}^{k_{i,q}} \big|\bra{e_x}v_{i,q,r}\rangle\big|^2
\end{equation}
From~\eqref{eq:tz} for every $x\neq y\in D$ we have 
\begin{equation*}
1=\sum_{i=1}^n \sum_{r=1}^{k_{i,x_i}} w_{x, i ,r}\langle e_y\ket{v_{i,x_i,r}}=\langle w_x\ket{ \alpha_{y, x}},
\end{equation*}
where $\ket{\alpha_{y,x}}$ is the vector of coefficients $\langle e_y\ket{v_{i,x_i,r}}$.
Therefore, by the Cauchy-Schwarz inequality
\begin{equation}\label{eq:w-alpha-x-ineq}
\big\|\ket{w_x}\big\| \cdot \big\| \ket{\alpha_{y,x}}\big\|\geq \big|\bra{w_x}\alpha_{y,x}\rangle\big|=1.
\end{equation}
On the other hand for $x=x_{(0)}\in D$ using~\eqref{eq:N-witness-orth} we have
\begin{align}
\sum_{y: y\neq x_{(0)}} \big\| |\alpha_{x_{(0)},y}\rangle\big\|^2 & =\sum_{y: y\neq x_{(0)}}\sum_i\sum_{r=1}^{k_{i,y_i}} \Big|\big\langle e_{x_{(0)}}\ket{v_{i,y_i,r}}\Big|^2\nonumber\\
&=\sum_{i,q\neq 0}\sum_r \big| \langle e_x\ket{v_{i,q,r}} \big|^2\nonumber\\
&=\bar{S}_{x_{(0)}}.\label{eq:Sprime_x}
\end{align}
As a result, letting $S= \max_x S_x$ and $\bar{S}=\max_x \bar{S}_x$, the complexity of the non-binary span program is lower bounded by
\begin{align*}
\max\{S,\bar{S}\} & \geq \max\bigg\{\frac{\sum_{y: y\neq x_{(0)}}S_{y}}{(\ell-1)n}, \bar{S}_x \bigg\}\\
&\geq\sqrt{\left(\frac{\sum_{y: y\neq x_{(0)}}S_{y}}{(\ell-1)n}\right) \bar{S}_{x_{(0)}}}\\
&=\frac{1}{\sqrt{(\ell -1)n}}\sqrt{\sum_{y: y\neq x_{(0)}}\big\|\ket{w_y}\big\|^2} \sqrt{\sum_{y: y\neq x_{(0)}} \big\| |\alpha_{x_{(0)},y}   \rangle \big\|^2}\\
&\geq \frac{1}{\sqrt{(\ell -1)n}}\sum_{y:y\neq x_{(0)}}   \big\|\ket{w_y}\big\| \cdot \big\| |\alpha_{x_{(0)},y}\rangle\big\|  \\
&\geq \frac{(\ell-1)n}{\sqrt{(\ell -1)n}}\\
&=\sqrt{(\ell-1)n},
\end{align*}
where in the third line we use~\eqref{eq:Sprime_x}, in the fourth line we use  Cauchy-Schwarz inequality and in the fifth line we use~\eqref{eq:w-alpha-x-ineq}.
Therefore any span program with orthogonal inputs for this problem has complexity at least $\sqrt{(\ell-1)n}$.
\end{proof}

Our next example is the Max function whose quantum query complexity was first shown in~\cite{Durr1996} to be $\Theta(\sqrt n)$. Here, using non-binary span program we prove the (loose) upper bound of $O\left(\sqrt{(\ell-1)n}\right)$. Note that the lower bound of $\Omega(\sqrt n)$ is immediate since Max is a generalization of the OR function.

\begin{example}[Max function]\label{ex:max}  The quantum query complexity of the function $\mathrm{Max}_n:[\ell]^n\to[\ell]$,  that given a list of $n$ numbers in $[\ell]$ outputs the maximum element of the list, is $O\left(\sqrt{(\ell-1)n}\right)$.
\end{example}


\begin{proof}
We first construct an NBSPwOI for this function with complexity $O\left((\ell-1)\sqrt{n}\right)$ and then convert it to an NBSP with complexity $O\left(\sqrt{(\ell-1)n}\right)$.
The non-binary span program with orthogonal inputs is as follows:
\begin{itemize}
\item The vector space is $(n+1)\ell$-dimensional with orthonormal basis
$$\{\ket{j,q}:\, q\in [\ell], 1\leq j\leq n\}\cup \{\ket{q}:\, q\in \ell\}.$$
\item $I_{j, q}=\{\ket{q}\}\cup\{\ket{j,r}: \;\forall r\geq q\}$
\item the target vectors are 
\begin{equation*}
\ket{t_\alpha}=c\ket{\alpha}+c^{-1}\sum_{j=1}^n\ket{j,\alpha}
\qquad \forall \alpha\in[\ell],
\end{equation*}
where $c^2=\sqrt{\frac{n}{\ell-1}}$.
\end{itemize}
If $x\in f^{-1}(\alpha)$, there exists $j$ such that $ x_j=\alpha$. Then the vectors $\ket{\alpha}$ and $\ket{\alpha, i}$, for all $1\leq i\leq n$ are available since $x_i\leq \alpha$. Therefore, $\ket{t_\alpha}$ can be written as a linear combination of available vectors, and the positive witness size is $c^2+nc^{-2}=2\sqrt{n(\ell-1)}$. 

Let
\begin{equation*}
\ket{\bar w_x}=c^{-1}\sum_{q>\alpha}\ket{q}+c\sum_{q<\alpha} \ket{j,q}.
\end{equation*}
It is easy to verify that $\ket{\bar w_x}$ is orthogonal to all available input vectors and $\langle \bar{w_x}\ket{t_\beta}=1$ for all $\beta\neq \alpha$. Thus $\ket{\bar w_x}$ is a valid negative witness for $x$ whose negative witness size is equal to 
\begin{equation*}
n(\ell-\alpha-1)c^{-2}+\alpha^2 c^2 \leq 2(\ell-1)^\frac{3}{2}\sqrt{n}.
\end{equation*}
Therefore, based on Remark~\ref{rem:gamma-SP-complexity} the complexity of this span program equals 
$$\sqrt{2\sqrt{n(\ell-1)}\cdot 2(\ell-1)^\frac{3}{2}\sqrt{n}} = 2(\ell-1)\sqrt {n},$$
Therefore the complexity of this span program with orthogonal inputs is equal to $O\left((\ell-1)n\right).$

In the above NBSOwOI we see that the vectors in $I_{j, q}$, for different values of $q$ (but fixed $j$) are not orthogonal; indeed these sets have non-empty intersections. Thus it is natural to try to convert this  NBSPwOI to a non-binary span program. As we will see this would save a factor of $\sqrt{\ell-1}$ in the complexity. Let
\begin{itemize}
\item $H=\bigoplus_j H_j$ and $H_j={\rm span}\big\{\ket{j} \otimes \ket{q}:\, q\in[\ell]\big\}\cup \big\{\ket{j}\otimes \ket{j,q}:\, q\in[\ell]\big\}$ so that $\dim H_j = 2\ell$.
\item $H_{j,q}={\rm span}\{\ket{j}\otimes\ket{q}\}\cup\{\ket{j}\otimes\ket{j,r}: r\geq q\}$
\item $V={\rm span} \big\{\ket{q}:\, q\in [\ell]\big\} \cup\big\{\ket{j,q}:\, q\in [\ell], j\in \{1,\ldots,n\}\big\}$
\item target vectors $\ket{t_\alpha}=c\ket{\alpha}+c^{-1}\sum_{j=1}^n\ket{j,\alpha}$ are as before
\item $A:H\to V$ is given by $A=\sum_{j,q}\ket{q}\bra{j}\otimes\bra{q}+\sum_{j,q}\ket{j,q}\bra{j}\otimes\bra{j,q}$
\end{itemize}
For any $x\in {\rm Max_n}^{-1}(\alpha)$ there exists $j_0\in\{1,\ldots,n\}$ such that $x_{j_0}=\alpha$. Then the positive witness $\ket{w_x}\in H(x)$ is given by
$$\ket{w_x}=c\ket{j_0}\otimes\ket{\alpha}+c^{-1}\sum_j \ket{j}\otimes\ket{j,\alpha},$$
and the positive witness size is $c^2+c^{-2}n$. The negative witness $\ket{\bar{w}_x}\in V$ remains as before
$$\ket{\bar{w}_x}=c^{-1}\sum_{q>\alpha} \ket{q}+c\sum_{q<\alpha}\ket{j_0,q},$$
yet the negative witness would be $(\ell-\alpha-1)c^{-2}n+\alpha c^2.$
Then by setting $c^2=\sqrt{n}$ and using Theorem~\ref{thm:NBSP2ADV} and Remark~\ref{rem:gamma-SP-complexity} we have 
\begin{equation*}
Q (\mathrm{Max}_n) =\Theta\big(\mathrm{ADV}^\pm (\mathrm{Max}_n)\big)=O\left(\sqrt{(\ell-1)n}\right).\qedhere
\end{equation*}
\end{proof}

A span program based quantum query algorithm for the triangle finding problem is proposed in~\cite{BR12}. This algorithm decides whether the input graph contains a triangle or is a forest with $O(n)$ quantum queries.
In the following example, we introduce a non-binary span program for the problem of not just distinguishing the existence of a triangle, but \emph{finding} it. In this regard, to get a function (not a relation) we need to assume that the input graph contains a unique triangle.

\begin{example}[Triangle finding] There exists a quantum query algorithm that given a simple graph $G=(V,E)$ with $n$ vertices  containing no cycle except a unique triangle, outputs the vertices of the triangle using $O(n)$ queries to the edges of $G$.
\end{example}

In this example, since the input alphabet of the function is constant we may restrict ourself to NBSPwOI (see Remark~\ref{rem:NBSP-vs-wOI}).

\begin{proof}
We design a non-binary span program using ideas in~\cite{BR12}. We first randomly color vertices of $G=(V,E)$ with three colors $c: V\to\{0,1,2\}$. Under this coloring, with probability $\frac6{27}$ the vertices of the unique triangle take different colors.  Therefore in our span program the input is a randomly 3-colored graph and we look for a \emph{colorful} triangle and output its vertex that has color $0$. Having this vertex at hand, we can do Grover search among its neighbors and find two of them that are connected, as other vertices of the triangle, in time $O(n)$.

The span program is as follows:
\begin{itemize}
\item  The input vector space has dimension $4n$ and an orthonormal basis for it consists of vectors 
\begin{equation*}
 \big\{\ket{u, i}:\, u\in V, i \in \{0,1,2,3\}\big\}.
\end{equation*}

\item  $I_{\{u, v\}, q}$ is non-empty only when $q=1$ and the vertices $u, v$ have different colors. In this case, depending on their colors, we have
\begin{itemize}
\item[-] if $c(u)=0,c(v)=1$, then $I_{\{u, v\}, 1} =\big\{\ket{u, 0}-\ket{v, 1} \big\}.$
\item[-] if $c(u)=1,c(v)=2$, then $I_{\{u, v\}, 1} =\big\{\ket{u, 1}-\ket{v, 2} \big\}.$
\item[-] if $c(u)=2,c(v)=0$, then $I_{\{u, v\}, 1}=\big\{\ket{u, 2}-\ket{v ,3} \big\}.$
\end{itemize}

\item  Target vectors are  $\ket{t_v}:=\ket{v, 0}-\ket{v, 3}$ for all $v\in c^{-1}(0)$, meaning that if $\ket{t_v}$ is in the span of the available vectors, then $v$ is a vertex of the triangle with $c(v)=0$.

\end{itemize}

Consider an input graph that contains a unique triangle on vertices $x, y, z$ colored $0,1,2,$ respectively. Then the target vector $\ket{t_x}$ can be written as a linear combination of available vectors:
\begin{equation*}
\big(\ket{x, 0}-\ket{y, 1}\big)+\big(\ket{y, 1}-\ket{z, 2}\big)+\big(\ket{z, 2}-\ket{x, 3}\big)=\ket{x, 0}-\ket{x, 3}=\ket{t_x}.
\end{equation*}
Thus, the positive witness size equals $O(1)$.

Construction of a negative witness $\ket{\bar w_G}$ needs more work. 
We first construct a graph $H$ out of $G$ as follows. Vertices of $H$ are the same as those of $G$ except that vertices with color $0$ are doubled. In this regard, we denote a vertex $v$ of $G$ with color $c(v)=i\in \{1, 2\}$ by $v_i$ in $H$. Moreover, a vertex $u$ with color $0$ has two copies $u_0$ and $u_3$ in $H$.  More explicitly,
$$V(H) = \{v_i: \forall v\in V, c(v)=i\in \{1,2\}\}\cup \{v_0, v_1:\, v\in V, c(v)=0\}.$$
The edges of $H$ are described as follows. First of all, there is no edge between two vertices of $H$ that have the same color. Second, 
the edges of $H$ between vertices with colors $1$ and $2$ remain the same as in $G$.  Third, the adjacent vertices to the doubled vertices are as follows: if $c(u)=0$ then $u_0$ is connected to $u_3$ in $H$, and to every vertex $v_1$ with $v$ being connected to $u$ in $G$; also, $u_3$ is connected (to $u_0$ and) to every vertex $w_2$ with $w$ being connected to $u$ in $G$. This completes the description of our new graph $H$.

The graph $H$ has exactly one cycle, namely $x_0-y_1-z_2-x_3-x_0$. If we \emph{contract} this cycle to a new vertex $A$, we get an acyclic graph $H'$ consisting of a union of trees. We fix a vertex $r$ as a root in every connected component of $H'$ as follows. We first fix a vertex $s$ of $G$, and in every subtree of $H'$ we let its root $r$ be the vertex with the minimum distance from $s$ in $G$.\footnote{If this vertex is not unique, let $r$ be the least one in some predetermined order} Now we assign a number $\gamma$ to every vertex. For any subtree with root $r$, we let $\gamma(r)=0$. Then we traverse the subtree from its root to its leaves. We assign the same number to $v_i$ as the previous vertex except when we move from a node labeled $u_0$ to $u_3$ or vice versa.  In the later cases,  when we move from $u_0$ to $u_3$ we decrease the assigned number by $1$, and when we move from $u_3$ to $u_0$ we  increase  the assigned number by $1$.
Finally, we define $\gamma(x_0)=\gamma(x_3)=\gamma(y_1)=\gamma(z_2)=\gamma(A)$. Now define
\begin{equation*}
\ket{\bar{w}_G}=\sum_{u} \gamma(u_i)\ket{u, i}.
\end{equation*}
It is not hard to verify that $\ket{\bar w_G}$ is orthogonal to all available vectors and  $\langle \bar w_G| t_v\rangle=1$ for all $v\neq x$ with $c(v)=0$.
Thus $\ket{\bar w_G}$ is a valid negative witness. The size of this negative  witness is upper bounded by  
$$\sum_{u, v} (\gamma(u_i) - \gamma(v_j))^2\leq 4n \sum_{u} |\gamma(u_i)|^2.$$
This is of order $O(n^4)$ in the worst case (over the choice of the coloring) since in general we have $-n\leq\gamma(u_i)\leq n$. Nevertheless, in most cases $\gamma(u_i)$'s are small numbers. Indeed, for every vertex $u_i$, by definition $\gamma(u_i)$ is at most $h(u_i)$, the depth of the vertex $u_i$ in the associated subtree of $H'$ (its distance from the root).  On the other hand, as mentioned above, in $H$ and in $H'$ we remove all edges of $G$ between vertices with the same color. Thus each edge of $G$ will be removed with probability $1/3$. Therefore, 
 the \emph{expected} size of the negative witness over the random choice of the coloring is upper bounded as
\begin{align*}
\mathbb E\left[4n \sum_{u} |\gamma(u_i)|^2\right] &\leq \mathbb E\left[4n \sum_{u} |h(u_i)|^2\right] \\
&\leq 4n\sum_{u} \sum_{k=1}^\infty k^2  \left(\frac23\right)^k\\
&=O(n^2).
\end{align*}
Here the second line follows from the fact that if $h(u_i)=k$, then the first $k$ edges of $G$ in the path from $u_i$ to $s$ must be present in $H'$.
Therefore, the expected complexity of this span program is $O(\sqrt{n^2\cdot 1})=O(n)$, and by Markov's inequality  with a constant probability over the random choice of the coloring, the complexity of the span program is $O(n)$.
\end{proof}

\section{Span Program for Relations}\label{sec:GQQC}
The methods we discussed so far  enable us to come up with quantum query algorithms for  functions.  A question at this stage is how much we can generalize these methods to deal with relations instead of functions.  
A natural approach to solve the relation evaluation problem is to use the \textit{state conversion} problem introduced by Lee et al.\ \cite{LMRSS11} to study the function evaluation problem.  
In the state conversion problem we are given query access to an input $x\in [\ell]^n$ and we are asked to convert an initial state $\rho_x$ to a final state $\sigma_x$ using as few queries to the input oracle as possible. The adversary bound has been generalized for these problems as well. Relation evaluation can be seen as a special case of the state conversion problem. Using this idea, Belovs~\cite{Bel15} gave a tight lower bound on the quantum query complexity of evaluating relations with bounded error.  

In this section we try to generalize the method of non-binary span program for relations. We will introduce span programs for certain relation evaluation problems and using results of~\cite{Bel15} show that they provide upper bounds on their quantum query complexity.

A relation $r$ can be thought of as  a function $r:D_r \to 2^{[m]}$ from $D_r\subseteq [\ell]^n$ to subsets of $[m]$, in which $x\in D_r$ is in relation with all elements of $r(x)$.
Given $x\in D_r$ by evaluation of such a relation for we mean to output some $\alpha\in r(x)$.

In the following we assume that
for every $x\in D_r$, $|r(x)|=k$ is a constant independent of $x$ (e.g., for functions we have $k=1$). Our span program for relations work only with this extra assumption. 
A span program $P$ for such a relation $r: D_r \to 2^{[m]}$ in the sense of NBSPwOI consists of
\begin{itemize}
\item  a finite-dimensional inner product space $V$

\item $m$ non-zero vectors $|t_0 \rangle, |t_2 \rangle,\ldots ,|t_{m-1} \rangle\in V$

\item input vector sets $I_{j,q}\subseteq V$ for all $1\leq j\leq n$ and $q\in [\ell]$.

\end{itemize}
We then define the set $I\subseteq V$ by 
$$I=\bigcup_{j=1}^n \bigcup_{q\in [\ell]} I_{j,q},$$
and as before  the set of available vectors by 
$$I(x)=\bigcup_{j\in[n]}I_{j,x_j}.$$
Also, the matrix $A$ of size $d\times |I|$, where $d=\dim V$, is defined as before.

We say that $P$ evaluates the relation $r$ if for each $x\in D_r$ the vector $\sum_{\alpha\in r(x)} \ket{t_\alpha}$ belongs to the span of the available vectors $I(x)$, and $\ket{t_{\beta}}$ for $\beta\notin r(x)$ does not belong to the span of $I(x)$. Even more, there should be two witnesses indicating these. Namely, there must exist a positive witness  $\ket{w_x} \in \mathbb C^{|I|}$ and a negative witness $\ket{\bar{w}_x}\in \mathbb C^d$ satisfying the followings: 
\begin{itemize}
\item First, the coordinates of $\ket{w_x}$ associated to unavailable vectors are zero. 
\item Second, $A\ket{w_x} = \frac{1}{|r(x)|}\sum_{\alpha\in r(x)} \ket{t_\alpha}$. 
\item Third, for all $\ket v\in I(x)$ we have $\langle v|\bar w_x\rangle =0$. 
\item Fourth, for all $\beta\notin r(x)$ we have $\langle t_{\beta}|\bar{w}_x\rangle=1$. 
\end{itemize}
The triple of the span program together with the set of positive and negative witnesses is denoted by $(P, w, \bar w)$.

Now we define the complexity of $(P, w, \bar w)$ for a relation $r$ similar to that of a function. For every $x\in D_r$ we define
$$\mathrm{wsize}_x(P,w_x,\bar{w}_x):=\max\big\{\|  \ket{w_x}\| ^2,\|  A^\dagger \ket{\bar{w}_x}\| ^2\big\}.$$
Next the complexity of $(P, w, \bar w)$ equals
\begin{align*}
 \mathrm{wsize}(P,w,\bar{w}):= \max_{x\in D_f} \; \mathrm{wsize}_x(P,w_x,\bar{w}_x). 
\end{align*}

Similar to the span program for functions we say that a span program $(P,w,\bar{w})$ for a relation $r$ is canonical if
\begin{itemize}
\item The underlying vector space is $|D_r|$-dimensional ($d=|D_r|$), and an orthonormal basis for this vector space is $\{|e_x\rangle:\,   x\in D_r\}$.

\item For any $\alpha\in [m]$ the target vector is $
\ket{t_\alpha}=\sum_{x: \alpha\notin r(x)}\ket{e_x}$
\item For any $x\in D_r$ the negative witness $\ket{\bar{w}_x}$ equals $\ket{e_x}$. As a consequence for all $ |v\rangle \in I(x)$ we have $ \langle e_x |v\rangle =0$.
\end{itemize}
As in the case of non-binary span programs, a span program $(P,w,\bar{w})$ for a relation $r:D_r \to 2^{[m]}$ has an equivalent canonical span program $(P',w',\bar{w}')$  with the same complexity.  The proof of this fact is similar to that of Proposition~\ref{pro:SpanProgram2can} 
and is not repeated here. 

Belovs  has shown in Theorem~40 of~\cite{Bel15} that assuming that given $\alpha\in[m]$ and $x\in D_r$, we can \emph{efficiently verify} whether $\alpha\in r(x)$ or not, then the following optimization program gives a tight lower bound for the quantum query complexity of evaluating the relation $r$ with bounded error (meaning that with high probability we can sample from elements of $r(x)$ given query access to $x$):

\begin{subequations}\label{SDP:AdvRelation}
\begin{align}
&\text{minimize} && \max\Big\{\max_{x\in D_r}\sum_{j=1}^n \|  \ket{u_{x,j}} \| ^2,\max_{x\in D_r}\sum_{j=1}^n\|  \ket{v_{x,j}} \| ^2\Big\} \\
&\text{subject to} && 1-\sum_{\substack{\alpha\in[m]}}\langle\sigma_{x,\alpha}|\sigma_{y,\alpha} \rangle = \sum_{\substack{1\leq j\leq n\\ x_j\neq y_j }} \langle u_{x,j}| v_{y,j} \rangle   \qquad \text{for all } x,y\in D_r;  \\ 
& &&\sum_{\substack{\alpha \notin r(x)}} \||\sigma_{x,\alpha}\rangle\|^2 = 0 \hspace{113pt} \text{for all } x\in D_r. 
\end{align}
\end{subequations}
Observe that in the first constraint, by putting $x=y$ we obtain 
$\sum_\alpha \|\ket{\sigma_{x, \alpha}}\|^2=1$. Thus letting 
$$\ket{\sigma_x}:= \bigoplus_\alpha \ket{\sigma_{x, \alpha}},$$
$\ket{\sigma_{x}}$ would be a normalized vector and a quantum state. Thus the above optimization problem is nothing but  the dual adversary bound for the state conversion problem~\cite{LMRSS11} of converting a fixed state to $\ket{\sigma_{x}}$ on which we optimize with a given constraint (the second one).

We can now state our main result about relations.

\begin{theorem} \label{thm:RelSpanProgram2SDP}
Suppose that $r:D_r \to 2^{[m]}$ with $D_r\subseteq [\ell]^n$ is a relation such that $|r(x)|=k$ for some constant $k$ independent of $x$. Let $(P,w,\bar{w})$ be a span program evaluating $r$ as defined above, with complexity $C$. Then the quantum query complexity of $r$ is at most $C$.
\end{theorem}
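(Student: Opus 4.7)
The plan is to convert the span program $(P,w,\bar w)$ into a feasible point of Belovs' SDP~\eqref{SDP:AdvRelation} whose objective value is exactly $C$, and then invoke the upper-bound direction of Theorem~40 of~\cite{Bel15} to extract a bounded-error quantum algorithm for $r$ using $O(C)$ queries. The construction parallels the NBSP-to-dual-adversary reduction carried out in the proof of Theorem~\ref{thm:NBSP2ADV}; the only genuinely new ingredient is a choice of the state vectors $\ket{\sigma_{x,\alpha}}$, and it is precisely here that the hypothesis $|r(x)|=k$ enters in an essential way.

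First I would pass to canonical form via the relation analog of Proposition~\ref{pro:SpanProgram2can} (whose existence is asserted in the paragraph preceding the theorem). The same map $B=\sum_{y\in D_r}\ket{e_y}\bra{\bar w_y}$ sends $A$ to $A'=BA$, turns the negative witnesses into $\ket{\bar w'_x}=\ket{e_x}$, and produces target vectors $\ket{t'_\alpha}=\sum_{y:\alpha\notin r(y)}\ket{e_y}$, while preserving the complexity. The place where relations differ from functions is that for this to give the advertised canonical targets one needs $\bra{\bar w_y}t_\alpha\rangle=0$ whenever $\alpha\in r(y)$; this is consistent with $\bra{\bar w_y}A\ket{w_y}=0$, which yields $\sum_{\alpha\in r(y)}\bra{\bar w_y}t_\alpha\rangle=0$, but the individual vanishing has to be enforced (e.g.\ by a short symmetrization over the labels in $r(y)$, which leaves $\wsize$ unchanged). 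I expect this to be the main technical wrinkle.

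Given a canonical span program, I would copy the construction of Theorem~\ref{thm:NBSP2ADV}: decompose $\ket{w_x}=\sum_j\ket{w_{x,j}}$ with $\ket{w_{x,j}}$ supported on the coordinates indexed by $I_{j,x_j}$, and let $\ket{a_{x,j}}$ be the restriction of $A^\dagger\ket{e_x}$ to the $j$-th block, which by orthogonality of $\ket{e_x}$ to $I(x)$ is supported only on $I_{j,q}$ with $q\neq x_j$. Setting $\ket{u_{x,j}}:=\ket{a_{x,j}}$ and $\ket{v_{x,j}}:=\ket{w_{x,j}}$, the disjointness of supports gives $\langle u_{x,j}|v_{y,j}\rangle=0$ whenever $x_j=y_j$, and hence
\begin{equation*}
\sum_{j:\,x_j\neq y_j}\langle u_{x,j}|v_{y,j}\rangle=\bra{e_x}A\ket{w_y}=\frac{1}{k}\sum_{\alpha\in r(y)}\bra{e_x}t_\alpha\rangle=\frac{|r(y)\setminus r(x)|}{k}=1-\frac{|r(x)\cap r(y)|}{k}.
\end{equation*}

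Finally, I would use the constant-$k$ hypothesis to set $\ket{\sigma_{x,\alpha}}:=\tfrac{1}{\sqrt k}\ket{\phi}$ if $\alpha\in r(x)$ and $\ket{\sigma_{x,\alpha}}:=0$ otherwise, for any fixed unit vector $\ket{\phi}$. Then $\sum_\alpha\langle\sigma_{x,\alpha}|\sigma_{y,\alpha}\rangle=|r(x)\cap r(y)|/k$ exactly matches the quantity computed above, $\sum_\alpha\|\ket{\sigma_{x,\alpha}}\|^2=1$ is the $x=y$ instance of the first SDP constraint, and $\sum_{\alpha\notin r(x)}\|\ket{\sigma_{x,\alpha}}\|^2=0$ gives the second constraint of~\eqref{SDP:AdvRelation}. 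The objective value equals $\max_x\max\{\|A^\dagger\ket{e_x}\|^2,\|\ket{w_x}\|^2\}=\wsize(P,w,\bar w)=C$, and Belovs' theorem converts this feasible point into the claimed quantum algorithm. The SDP-feasibility step is thus essentially a relation-aware rerun of the NBSP argument; beyond that, the only genuinely new content is the canonical-form reduction, which I flag as the single non-routine step and where the $|r(x)|=k$ assumption makes the uniform choice of $\ket{\sigma_{x,\alpha}}$ actually unit-norm.
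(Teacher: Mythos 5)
Your proposal takes essentially the same route as the paper's proof: pass to canonical form, read off $\ket{u_{x,j}}$ and $\ket{v_{x,j}}$ from the rows of $A$ and the positive witnesses so that $\sum_{j:x_j\neq y_j}\langle u_{x,j}|v_{y,j}\rangle=1-\frac{1}{k}|r(x)\cap r(y)|$, set $\ket{\sigma_{x,\alpha}}=\frac{1}{\sqrt{k}}\ket{\phi}$ for $\alpha\in r(x)$ and $0$ otherwise (the paper uses $\ket{e_\alpha}$ in place of your fixed $\ket{\phi}$, which changes nothing), and invoke Theorem~40 of Belovs. The canonicalization wrinkle you flag is genuine --- the definition only pins down $\langle\bar{w}_y|t_\alpha\rangle$ for $\alpha\notin r(y)$, so $B\ket{t_\alpha}$ need not equal $\sum_{y:\alpha\notin r(y)}\ket{e_y}$ without an extra argument --- whereas the paper simply asserts that the canonical reduction is similar to Proposition~\ref{pro:SpanProgram2can} without addressing this point.
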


\begin{proof}
The proof is very similar to the proof of Theorem~\ref{thm:NBSP2ADV}; we convert the given span program to a feasible solution of the dual adversary SDP~\eqref{SDP:AdvRelation} that characterizes the quantum query complexity of $r$. Without loss of generality we assume that $(P,w,\bar{w})$ is canonical. Then for each $x\in D_r$ we have 
\begin{align*}
A\ket{w_x}=\frac1k\sum_{\alpha\in r(x)}\ket{t_\alpha},
\end{align*}
and then for any $y\in D_r$
\begin{align*}
\langle e_y|A\ket{w_x} & =  \frac 1k \sum_{\alpha\in r(x)} \langle e_y| t_\alpha\rangle = \frac 1 k \big| r(x)\setminus r(y)\big| = 1- \frac 1 k \big| r(x)\cap r(y) \big|.
\end{align*}
Now as in the proof of Theorem~\ref{thm:NBSP2ADV} we may define vectors $\ket{u_{x, j}}$ and $\ket{v_{x, j}}$ from rows of $A$ and the positive witnesses. So let
\begin{align*}
|u_{x,j}\rangle=\bigoplus_{q\in[\ell]} |a_{x,jq}\rangle,
\qquad
|v_{x,j}\rangle=\bigoplus_{q\in[\ell]} |w_{x,jq}\rangle.
\end{align*}
Then the above equation says that 
$$\sum_{j: x_j\neq y_j}\langle u_{x,j}|v_{y, j}\rangle= 1-\frac1k|r(x)\cap r(y)|.
$$
We also set
\begin{align}
 \ket{\sigma_{x,\alpha}}=
\begin{cases} 
 \frac{1}{\sqrt{k}}\ket{e_\alpha}  & \alpha\in r(x),\\
 0 & \text{ otherwise}.
 \end{cases}
\end{align}
These give a feasible solution for~\eqref{SDP:AdvRelation} with the same objective value as the complexity of $(P,w,\bar{w})$.
\end{proof}

\section{Non-binary Learning Graph}\label{sec:NBLG}

Learning graph is another computational model introduced by Belovs~\cite{Bel12} that is used for the design of quantum query algorithms for functions with binary output.
Learning graph somehow models the flow of information we obtain about the output of the function while we make queries to its input. Such queries are made one by one until we can 
\emph{certify} the output of the function. 
In this section we generalize the learning graph technique for finding quantum query algorithms for arbitrary functions $f:D_f \to [m]$ with $D_f\subseteq [\ell]^n$.

We first need a few definitions before introducing learning graphs. These are straightforward generalizations of the notions introduced by Belovs~\cite{Bel12, Bel14} to the non-binary case.  

\begin{definition} 
\begin{enumerate}\rm
\item\label{def:cer}
\textbf{(Certification)}  Let $f: D_f\to [m]$ be a function, $x\in f^{-1}(\alpha)$ be an input, and $S\subseteq \{1, \dots, n\}$ be a nonempty set of indices. Let $x_S$ be the substring of $x$ whose indices come from the subset $S$.  We say that $S$ certifies $f(x)=\alpha$, if for every $y\in D_f$ with $x_S=y_S$ we have $f(y)=f(x)=\alpha$.
\item \label{def:icer}
\textbf{($\alpha$-certificate)}  Given a function $f: D_f\to [m]$ and an input $x\in f^{-1}(\alpha)$, we say that 
$\mathcal{M}_x$, a collection of subsets of $\{1, \dots, n\}$, is an $\alpha$-certificate of $f$ for $x$, if (i) each $S\in \mathcal{M}_x$ certifies $f(x)=\alpha$ and (ii) $\mathcal{M}_x$ is closed under taking supersets, i.e.,
\begin{equation}\label{eq:icer}
\forall S,S': S\in  \mathcal{M}_x, S\subseteq S' \quad \Rightarrow \quad S'\in \mathcal{M}_x.
\end{equation} 
Indeed, if $S$ certifies $f(x)=\alpha$, so does any superset of $S$.
\item\label{def:cerStru}
\textbf{(Certificate Structure)}  A certificate structure is a collection $\mathcal{E}$ of certificates for all $x\in D_f$ with $f(x)\neq 0$. In other words, $\mathcal E$ is a certificate structure for $f$ if it contains an $\alpha$-certificate $\mathcal M_x$ for every $x$ with $f(x)=\alpha\neq 0$.
We emphasis that we do not need certificates for those $x$ with $f(x)=0$. 
\end{enumerate}
\end{definition}

A learning graph is designed based on a certificate structure and can be converted to a quantum query algorithm for any function having the same certificate structure.
The learning graph and flow for functions with binary output ($m=2$) is defined by Belovs~\cite{Bel12,Bel14} whose definition can easily be generalized for arbitrary functions.

\begin{definition}[Learning graph]\label{def:LG}\rm A learning graph $\mathcal{G}$ is a \emph{weighted} acyclic directed graph whose set of nodes is a subset of the power set of $\{1, \dots, n\}$, and whose edge set contains only directed edges of the form $S\to S\cup \{j\}$, where $S\subset \{1, \dots, n\}$ and $j\in \{1, \dots, n\}\setminus S$. The root of $\mathcal{G}$ is fixed to be the empty set $\emptyset$, and the weight of an edge $e$ is denoted by $w_e$.

Given a function $f: D_f\to [m]$ with a fixed certificate structure $\mathcal E$, a flow on a learning graph is a collection of flows on the graph for every $x$ with $f(x)\neq 0$ as follows:

\begin{itemize}
\item The value of the flow for $x\in D_f$ with $f(x)=\alpha\neq 0$ associated to the edge $e$ of the learning graph is denoted by $p_e(x)$.


\item The only \emph{source} of the flow is the vertex $\emptyset $.

\item A vertex $S$ is a \emph{sink} only if $S\in \mathcal{M}_x$ belongs to the $\alpha$-certificate of $x$ (in $\mathcal E$).

\item For each vertex $S$ which is not a sink nor a source, the sum of  $p_e(\mathcal{M}_x)$'s over all edges $e$ leaving  $S$ is equal to the sum of $p_e(\mathcal{M}_x)$'s over all incoming edges $e$ to $S$.

\item  The value of the flow is 1, meaning that the sum of all $p_e(\mathcal{M}_x)$ on edges leaving $\emptyset$ equals 1.
\end{itemize}
\end{definition}

Given a learning graph $\mathcal{G}$ together with a collection of flows $p_e(\mathcal{M}_x)$ for every $x$ with $f(x)\neq 0$ as above, the \emph{complexity} of the learning graph is defined as
\begin{equation}
\mathcal{C}(\mathcal G, p_e)=\sqrt{\mathcal{N}(\mathcal G, p_e)\cdot\mathcal{P}(\mathcal G, p_e)},
\end{equation}
where
\begin{equation}
\mathcal{N}(\mathcal G, p_e)=\sum_{e} w_e \hspace{30pt}\text{and}\hspace{30pt} \mathcal{P}(\mathcal G, p_e)=\max_{x: f(x)\neq 0}\sum_{e} \frac{p_e^2(\mathcal{M}_x)}{w_e}.
\end{equation}

We can now state the main theorem of this section.

 \begin{theorem}\label{thm:LGraph}
Let $(\mathcal{G}, p_e)$ be a learning graph together with a collection of flows as above for a function $f: D_f\to [m]$ with  $D_f\subseteq [\ell]^n$. Then there is a solution to the dual adversary SDP for the same function with objective value being $O(\mathcal{C}(\mathcal G, p_e))$. Thus $O(\mathcal{C}(\mathcal G, p_e))$ is an upper bound on the quantum query complexity of $f$.
\end{theorem}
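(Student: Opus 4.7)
The plan is to convert the learning graph $(\mathcal{G}, p_e)$ into a non-binary span program evaluating $f$ whose complexity is $O(\mathcal{C}(\mathcal{G}, p_e))$, and then invoke Theorem~\ref{thm:NBSP2ADV} to extract the desired feasible point of the dual adversary SDP~\eqref{eq:dual-SDP} of the same order. The construction generalizes the original Belovs conversion from binary learning graphs into span programs, with modifications to accommodate non-binary input and output alphabets.

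I would build the NBSP on an ambient space whose basis vectors are of the form $\ket{S, \gamma}$, indexed by pairs of a vertex $S$ of $\mathcal{G}$ and an assignment $\gamma\colon S \to [\ell]$ consistent with some $x \in D_f$, together with auxiliary target-marker vectors $\ket{\tau_\alpha}$ for each $\alpha \in [m]$. For every edge $e\colon S \to S \cup \{j\}$ of $\mathcal{G}$, every assignment $\gamma\colon S \to [\ell]$, and every value $q \in [\ell]$, the transition vector $\sqrt{w_e}\bigl(\ket{S, \gamma} - \ket{S \cup \{j\}, \gamma \cup \{j \mapsto q\}}\bigr)$ is placed into $I_{j, q}$, so that this transition is available exactly when $x_j = q$. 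For each pair $(S, \gamma)$ arising as a sink-assignment pair of $\mathcal{M}_y$ for some $y \in f^{-1}(\alpha)$ with $\alpha \neq 0$, I would add a ``sink absorption'' vector linking $\ket{S, \gamma}$ to $\ket{\tau_\alpha}$ (placed in the $I_{j,q}$'s so as to be available when $x_S = \gamma$, using the path structure of the graph to propagate availability), and take $\ket{t_\alpha} = \ket{\tau_\alpha}$.

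For $x \in f^{-1}(\alpha)$ with $\alpha \neq 0$, the flow $p_e(\mathcal{M}_x)$ yields a positive witness $\ket{w_x}$ by assigning coefficient $p_e(\mathcal{M}_x)/\sqrt{w_e}$ to each edge's available transition vector $(e, x_S, x_j)$. Telescoping together with flow conservation yields $A\ket{w_x} = \ket{\emptyset,\emptyset} - \sum_{S \in \mathcal{M}_x}(\text{flow at }S)\ket{S, x_S}$, and each sink term is transferred to $\ket{\tau_\alpha}$ via the corresponding absorption vector, producing $A\ket{w_x} = \ket{t_\alpha}$ after a trivial constant shift. Hence $\|\Pi_{\text{non-free}}\ket{w_x}\|^2 = \sum_e p_e^2(\mathcal{M}_x)/w_e \le \mathcal{P}(\mathcal{G}, p_e)$. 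For $x \in f^{-1}(0)$, a positive witness is produced directly from the absorption vectors of a $0$-certificate. The negative witness $\ket{\bar w_x}$ is the potential on basis vectors that vanishes on $\ket{S, \gamma}$ with $\gamma = x_S$, equals $1$ on $\ket{\emptyset,\emptyset}$, and takes value $[\alpha \neq f(x)]$ on $\ket{\tau_\alpha}$. Orthogonality to every available transition vector is immediate by the equal-potential endpoints, and orthogonality to every available sink absorption vector follows from the defining property of certificates: if a sink $(S, \gamma)$ of an $\alpha$-certificate is consistent with $x$, then $f(x) = \alpha$, so the two endpoints of the sink absorption vector carry the same potential. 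Each transition vector contributes at most $w_e$ to $\|A^\dagger\ket{\bar w_x}\|^2$, giving negative witness size at most $\mathcal{N}(\mathcal{G}, p_e) = \sum_e w_e$. Rescaling as in Remark~\ref{rem:gamma-SP-complexity} produces total complexity $\sqrt{\mathcal{N} \cdot \mathcal{P}} = \mathcal{C}(\mathcal{G}, p_e)$.

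The main obstacle lies in the asymmetry introduced by the non-binary output. In the binary case, positive witnesses are only needed for $f(x) = 1$ and negative ones for $f(x) = 0$, so the fact that the learning graph only provides flows for $f(x) \neq 0$ aligns with the span program requirements. In the non-binary NBSP, both a positive and a negative witness are required for every $x \in D_f$, and moreover a \emph{single} negative witness $\ket{\bar w_x}$ must satisfy $\langle \bar w_x | t_\beta \rangle = 1$ for \emph{every} $\beta \neq f(x)$ simultaneously. Designing the sink absorption vectors so that their availability pattern is compatible with single-coordinate queries (which cannot directly encode the ``AND'' condition $x_S = \gamma$ over multiple indices) is the most delicate step; the fix is to route the absorption through the existing path structure of the learning graph, so that flow conservation implicitly enforces the conjunction. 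The fact that a sink of an $\alpha$-certificate consistent with $x$ must have $f(x) = \alpha$ is what makes the single-potential negative witness correctly distinguish all incorrect outputs at once.
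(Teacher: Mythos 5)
Your overall strategy --- convert the learning graph to a non-binary span program whose complexity is $\mathcal{C}(\mathcal G, p_e)$ and then invoke Theorem~\ref{thm:NBSP2ADV} --- is exactly the route the paper takes, and your transition vectors, flow-based positive witness, and potential-based negative witness match the paper's construction. However, there are three concrete gaps. First, the ``sink absorption'' vectors cannot be placed in the sets $I_{j,q}$ so as to become available exactly when $x_S=\gamma$: membership in $I_{j,q}$ conditions availability on a \emph{single} coordinate, and a conjunction over $|S|>1$ coordinates is not expressible this way; your appeal to ``routing through the path structure'' is not a construction. The paper instead makes these vectors \emph{free} (always available), of the form $\ket{t_\alpha}-(\ket{\emptyset}-\ket{S,\psi_S})\otimes\ket{\mu_\alpha}$, and tensors the transition vectors with $\ket{\alpha}$ and the negative witness with $\ket{\nu_{f(x)}}$, where $\langle\mu_\alpha|\nu_\beta\rangle=1-\delta_{\alpha,\beta}$; this is precisely what makes the single negative witness orthogonal to \emph{every} free absorption vector, including those of $f(x)$-certificates $(S,\psi_S)$ with $\psi_S\neq x_S$, which your plain potential cannot handle (and indeed your stated potential is internally inconsistent: $\ket{\emptyset,\emptyset}$ is itself a consistent pair, so it cannot simultaneously carry value $1$ and vanish).

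Second, for $x\in f^{-1}(0)$ there is no flow and no certificate by definition (the certificate structure deliberately omits $f^{-1}(0)$), so there is no ``$0$-certificate'' from which to build a positive witness; the paper handles these inputs by first running the binary learning-graph algorithm for the indicator function $\tilde f(x)=[f(x)\neq 0]$, which costs $O(\mathcal C(\mathcal G,p_e))$ queries, and then restricting the span program to $D_f\setminus f^{-1}(0)$. Third, your negative witness size is undercounted: for each edge $e_{S,j}$ and the consistent assignment $\psi_S=x_S$, there are $\ell-1$ \emph{unavailable} transition vectors (one for each $q\neq x_j$) each contributing $w_e$, so the NBSPwOI you describe has negative witness size $(\ell-1)\sum_e w_e$ and complexity $\sqrt{\ell-1}\,\mathcal C(\mathcal G,p_e)$, not $\mathcal C(\mathcal G,p_e)$. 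The paper removes this factor by a final conversion to a general NBSP in which the subspaces $H_{j,q}$ for different $q$ are no longer orthogonal. Without these three repairs the argument does not establish the claimed $O(\mathcal C(\mathcal G,p_e))$ bound.
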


The definition of the learning graph with flows and their complexity for functions with binary output $(m=2)$ have been proposed in~\cite{Bel14}, which we easily generalized for arbitrary functions. Moreover, the above theorem has been proven in~\cite{Bel14} in the case of $m=2$.  Below we will give a proof that works for arbitrary functions.

There are indeed two proofs of Theorem~\ref{thm:LGraph} when $m=2$. The first one, which works only for $\ell=2$, is based on converting a learning graph $(\mathcal G, p_e)$ to a span program for $f$ with the same complexity~\cite{Bel12}. The second proof, is based on converting the learning graph directly to a feasible solution of the dual adversary SDP~\cite{LB11}. Here for non-binary learning graphs we generalize the former proof and find a feasible solution to the dual adversary SDP using non-binary span programs.


\begin{proof}
Let $\tilde f: D_f\to \{0,1\}$ be the binary version of $f$ which decides whether $f(x)$ is zero or not, i.e., $\tilde f(x)$ equals $0$ if $f(x)=0$, and equals $1$ if $f(x)\neq 0$. Observe that $(\mathcal G, p_e)$ is a valid learning graph for $\tilde f$ as well. Then by the special case of the theorem for $m=2$ (which has been proven in~\cite{Bel14}), with $O(\mathcal{C}(\mathcal G, p_e))$ queries to $x$ we can decide $\tilde f(x)$. If $\tilde f(x)=0$ the value of $f(x)$ would be determined. Otherwise we know that $f(x)\neq 0$ and we must determine $\alpha=f(x)\in \{1, \dots, m-1\}$.

By the above discussion, with no loss of generality we may restrict ourself to $D'_f= D_f\setminus f^{-1}(0)$. In other words, we may assume that there is no $x$ with $f(x)=0$.  Note that in this case there is a flow $p_e(x)$ for every $x\in D'_f$.

To prove this theorem we first build an NBSPwOI having complexity equal to that of the learning graph up to a factor of $\sqrt{\ell-1}$. Then we remove the extra factor of $\sqrt{\ell-1}$ by converting it to an NBSP. 

As we mentioned before, vertices of the learning graph are subsets $S$ of $\{1,\dots, n\}$, an edge $S\to S\cup \{j\}$ of the learning graph is denoted by $e_{S, j}$, and by $\psi_S$ we mean an assignment of elements of $S$, i.e., $\psi_S\in [\ell]^S$.
The non-binary span program with orthogonal inputs is as follows:
\begin{itemize}
\item An orthonormal basis of the vector space $V$ is given by
\begin{equation*}
\big\{\ket{S, \psi_S, \alpha} : S\subseteq \{1, \dots, n\}, \psi_S \in [\ell]^{S}, \alpha \in [m]\big\}\cup\{\ket{t_\alpha}:\alpha\in[m]\}.
\end{equation*}
\item The set $I_{j,q}$ consists of all vectors of the form
\begin{align}\label{eq:ijq-alpha}
\sqrt{w_{e_{S, j}}} \big(-\ket{S,\psi_S}+\ket{S\cup\{j\},\psi_S\cup\{j\to q\}}\big)\otimes \ket{\alpha},
\end{align}
where $ e_{S, j}$ is an edge of the learning graph, $\psi_S\in [\ell]^{S}$ is arbitrary and $\psi_S\cup\{j\to q\}$ means an assignment of $S\cup\{j\}$ which coincides with $\psi_S$ on $S$ and assigns $q$ to the $j$-th index. Finally $\alpha\in[m]$ is arbitrary. 

\item Target vectors are $\{\ket{t_\alpha}: \alpha\in[m]\}$.

\item Free input vectors are 
$$I_{\text{free}}=\Big\{\ket{t_\alpha}-\big(\ket{\emptyset} -\ket{S,\psi_S}\big)\otimes \ket{\mu_\alpha} : ~ \alpha\in[m] , (S,\psi_S) \text{ is an $\alpha$-certificate}\Big\},$$
where $\ket{\mu_\alpha}$'s are vectors defined in~\eqref{eq:mu}, and by $\ket{\emptyset}$ we mean the vector associated to the empty set (with the empty assignment).
\end{itemize}

Then for every $x\in f^{-1}(\alpha)$ we have 
\begin{align*}
\ket{t_\alpha}=&-\sum_{e_{S,j}} \frac{p_e(x)}{\sqrt{w_{e_{S, j}}}}\Big(\sqrt{w_{e_{S, j}}}\big(-\ket{S,x_S}+\ket{S\cup \{j\},x_S \cup\{j\to x_j\}}\big)\Big)\otimes \ket{\mu_\alpha}\\
&+\sum_{(S,\psi_S)\text{: $\alpha$-certificate}}p_e(x)\Big(\ket{t_\alpha}-\big(\ket{\emptyset}-\ket{S,x_S} \big)\otimes \ket{\mu_\alpha}\Big)
\end{align*}
where $x_S\in [\ell]^S$ is the restriction of $x$ on $S$.
Thus the positive witness size for $x$  equals  $\| \ket{\mu_\alpha}\|^2\sum_e p_e^2(x)/w_e$.

For the negative witness define
\begin{equation}\label{eq:LGwbar}
\ket{\bar{w}_x}=\sum_{\beta:\,\beta\neq \alpha}\ket{t_{\beta}}+ \sum_{S}\ket{S, x_S} \otimes \ket{\nu_{\alpha}},
\end{equation}
where $\ket{\nu_\alpha}$ is defined in~\eqref{eq:nu}. By construction  $\ket{\bar{w}_x}$ is orthogonal to all free  and available input vectors. Moreover, we have $\langle \bar{w}_s \ket {t_\beta}=1$ for all $\beta\neq \alpha$. Thus $\ket{\bar w_x}$ is a valid negative witness. 
To calculate the negative witness size note that for each edge $e_{S, j}$ of the learning graph there are exactly $\ell-1$ input vectors that contribute to the negative witness size. So the negative witness size is $(\ell-1)\|\ket{\nu_\alpha}\|^2\sum_e w_e$. We conclude that the complexity of the span program is $\sqrt{\ell-1}$ times the complexity of the learning graph.

Now to finish the proof we remove the extra $\sqrt{\ell-1}$ factor by converting the above NBSPwOI to an NBSP. The idea is that the vectors in $I_{j,q}$ given by~\eqref{eq:ijq-alpha} for different values of $q$ are not orthogonal. Then it is natural to rewrite the above NBSPwOI as a general NBSP.
The resulting non-binary span program consists of
\begin{itemize}
\item $H=\bigoplus_j H_j\oplus H_{\rm free},$
\item An orthonormal basis for $H_j$ consists of 
$$\big\{\ket{j,S,\psi_S,\alpha}:~ e_{S,j}\text{ is an edge}, \psi_S\in [\ell]^S ,\alpha\in [m]\big\},$$
\item $H_{j,q}={\rm span}\big\{\big(-\ket{j,S,\psi_S}+\ket{j,S\cup\{j\},\psi_S\cup\{j\to q\}}\big)\ket{\alpha}:~  e_{S,j}\text{ is an edge},  \psi_S\in[\ell]^S, \alpha\in [m] \big\},$
\item An orthonormal basis for $H_{\rm free}$ consists of  
$$\{\ket{f_{\alpha,S,\psi_S}} : ~ \alpha\in[m] , (S,\psi_S) \text{ is an $\alpha$-certificate}\},$$
\item An orthonormal basis for $V$ is given by
$$\big\{\ket{S,\psi_S,\alpha}:~ S\subseteq\{1,\ldots,n\}, \psi_S\in[\ell]^S,\alpha\in[m]\big\}\cup\big\{\ket{t_\alpha}\big|\alpha\in [m]\big\},$$
\item $A:H\to V$ is given by
$$A=\sum
_{j,S,\psi_S,\alpha} \sqrt{w_{e_{S,j}}}\ket{S,\psi_S,\alpha}\bra{j,S,\psi_S,\alpha}+\sum_{\alpha,S,\psi_S}\ket{g_{\alpha,S,\psi_S}}\bra{f_{\alpha,S,\psi_S}},$$
where $\ket{g_{\alpha, S,\psi_S}}=\ket{t_\alpha}-\big(\ket{\emptyset}-\ket{S,\psi_S}\big)\ket{\mu_\alpha}$, 
\item target vectors are $\{\ket{t_\alpha}: \alpha\in[m]\}.$
\end{itemize}

Then for any $x\in f^{-1}(\alpha)$, the positive witness $\ket{w_x}\in H(x)$ is
\begin{align*}
\ket{w_x}=&-\sum_{e_{S,j}} \frac{p_e(x)}{\sqrt{w_{e_{S, j}}}}\Big(\big(-\ket{j,S,x_S}+\ket{j,S\cup \{j\},x_S \cup\{j\to x_j\}}\big)\Big)\otimes \ket{\mu_\alpha}\\
&+\sum_{(S,\psi_S)\text{: $\alpha$-certificate}}p_e(x)\ket{f_{\alpha,S,\psi_S}}.
\end{align*}
Therefore, the size of the positive witness equals 
$$2\| \ket{\mu_\alpha}\|^2\sum_e p_e^2(x)/w_e\leq 4 \sum_e p_e^2(x)/w_e.$$
The negative witness $\ket{\bar{w}_x}\in V$ is equal to~\eqref{eq:LGwbar}, but the negative witness size is
\begin{equation*}
\| \bra{\bar{w_x}}A\|^2=\|\ket{\nu_\alpha}\|^2\sum_{e} w_{e} \leq 4\sum_{e} w_{e}. 
\end{equation*}
As a result the factor of $\sqrt{\ell-1}$ does not appear in the complexity of this non-binary span program. We are done.
\end{proof}

The advantage of the non-binary learning graph is that, if we have a learning graph for a binary function having certificate structure $C$, we can use the same  learning graph to bound the query complexity of any \emph{non-binary} function having the same certificate structure. See the following example for clarification.

\begin{example}
Let $G$ be a graph whose edges are colored  by $c$ different colors, in such a way that all monochromatic triangles have the same color.
Assume that we have query access to colors of the edges of $G$ (if there is no edge in a queried pair of vertices, the query output will be $0$). Then the quantum query complexity of detecting the color of a monochromatic triangle in $G$ is $O(n^{9/7})$.
\end{example}
\begin{proof}
The best learning graph that has been defined for triangle finding problem in~\cite{LMS17} has complexity  $O(n^{9/7})$. Since the certificate structure of the triangle finding problem is the same as the problem of finding a monochromatic triangle in a colored graph, the learning graph of the former problem together with its flow is a valid learning graph for the latter. Therefore the quantum query complexity of detecting the color of a monochromatic triangle in a graph, whose edges are colored arbitrarily using $m$ different colors, is $O(n^{9/7})$.
\end{proof}

\section{Concluding Remarks}\label{sec:conclusion}
In this paper we generalized the notion of span program for non-binary functions. We showed that our non-binary span program gives a characterization of the quantum query complexity of non-binary functions that is tight up to a constant factor. We also introduced NBSPwOI as a special case of non-binary span programs, which although is an intuitive tool to help us design non-binary span programs, has an unavoidable factor of $\sqrt{\ell-1}$ in its complexity. Yet, when $\ell$ is a constant, in designing quantum query algorithms we may restrict ourselves to non-binary span programs with orthogonal inputs. Moreover, as shown in some examples, NBSPwOIs can be used as an intermediate step in designing NBSPs and to find tight bounds on the quantum query complexity of non-binary functions.


Span programs have been used to design quantum query algorithms for various problems such as formula evaluation~\cite{Rei09}, the rank problem~\cite{Bel11}, $st$-connectivity and claw detection~\cite{BR12}, graph collision problem~\cite{ABIOS13}, tree detection~\cite{Wang13}, graph bipartiteness and connectivity~\cite{Ari15}, and detecting cycles~\cite{CMB16}. Now that we have a natural generalization of span program for  functions  with non-binary input/output alphabets, we can use ideas from these problems to design new quantum query algorithms for function with non-binary input/output alphabets, as what we did for the problem of triangle finding. Besides this example, we already have other applications of the non-binary span program, especially for simplification of the proof and generalization of some existing results for binary functions which will be presented in future works.

We also suggested a method for proving bounds on the quantum query complexity of non-binary functions via learning graphs. In the binary case, the method of
learning graphs has been used to design quantum query algorithms for graph collision, $k$-distinctness~\cite{Bel2012} and triangle finding problems~\cite{LMS17}.  Now a natural question is, can we generalize the ideas behind these learning graphs to prove upper bounds on the quantum query complexity of some non-binary functions?

\section*{Acknowledgements} 
The authors are thankful to Ronald de Wolf for his comments on an early version of this paper and to Stacey Jeffery for elaborations on her results in~\cite{Jeffery14, ItoJeffery15}.  They are also grateful to unknown referees whose suggestions and comments significantly improved the presentation of the paper.

\bibliography{references}

\newcommand{\etalchar}[1]{$^{#1}$}
\begin{thebibliography}{LMR{\etalchar{+}}11}

\bibitem[ABI{\etalchar{+}}13]{ABIOS13}
Andris Ambainis, Kaspars Balodis, J\=anis Iraids, Raitis Ozols, and Juris
  Smotrovs.
\newblock {Parameterized Quantum Query Complexity of Graph Collision}.
\newblock page~12, 2013.

\bibitem[Amb02]{Amb02}
Andris Ambainis.
\newblock {Quantum Lower Bounds by Quantum Arguments}.
\newblock {\em Journal of Computer and System Sciences}, 64(4):750--767, jun
  2002.

\bibitem[Ari15]{Ari15}
Agnis Arin{\v{s}}.
\newblock {Span-program-based quantum algorithms for graph bipartiteness and
  connectivity}.
\newblock {\em arXiv:1510.07825 [quant-ph]}, 2015.

\bibitem[Bel11]{Bel11}
Aleksandrs Belovs.
\newblock {Span-program-based quantum algorithm for the rank problem}.
\newblock {\em arXiv preprint arXiv:1103.0842}, 2011.

\bibitem[Bel12a]{Bel2012}
Aleksandrs Belovs.
\newblock {Learning-graph-based quantum algorithm for k-distinctness}.
\newblock In {\em Proceedings - Annual IEEE Symposium on Foundations of
  Computer Science, FOCS}, pages 207--216, 2012.

\bibitem[Bel12b]{Bel12}
Aleksandrs Belovs.
\newblock Span programs for functions with constant-sized 1-certificates:
  Extended abstract.
\newblock In {\em Proceedings of the Forty-fourth Annual ACM Symposium on
  Theory of Computing}, STOC '12, pages 77--84, New York, NY, USA, 2012. ACM.

\bibitem[Bel14]{Bel14}
Aleksandrs Belovs.
\newblock {\em {Applications of the Adversary Method in Quantum Query
  Algorithms}}.
\newblock PhD thesis, Feb 2014.

\bibitem[Bel15]{Bel15}
Aleksandrs Belovs.
\newblock Variations on quantum adversary.
\newblock {\em arXiv preprint arXiv:1504.06943}, 2015.

\bibitem[BL11]{LB11}
Aleksandrs Belovs and Troy Lee.
\newblock {Quantum Algorithm for k-distinctness with Prior Knowledge on the
  Input}.
\newblock {\em arXiv}, 8(15):1--21, 2011.

\bibitem[BR12]{BR12}
Aleksandrs Belovs and Ben~W Reichardt.
\newblock Span programs and quantum algorithms for st-connectivity and claw
  detection.
\newblock In {\em Proceedings of the 20th Annual European conference on
  Algorithms (ESA 12)}, pages 193--204. Springer Berlin Heidelberg, 2012.

\bibitem[CMB16]{CMB16}
Chris Cade, Ashley Montanaro, and Aleksandrs Belovs.
\newblock {Time and Space Efficient Quantum Algorithms for Detecting Cycles and
  Testing Bipartiteness}.
\newblock {\em arXiv:1610.00581}, oct 2016.

\bibitem[DH96]{Durr1996}
Christoph Durr and Peter H{\o}yer.
\newblock {A Quantum Algorithm for Finding the Minimum}.
\newblock {\em arXiv:quant- ph/9607014}, 1996.

\bibitem[Gro96]{Grover96}
L~K Grover.
\newblock {A fast quantum mechanical algorithm for database search}.
\newblock {\em Proceedings, 28th Annual ACM Symposium on the Theory of
  Computing (STOC)}, pages 212--219, 1996.

\bibitem[HL{\v S}07]{HLS07}
Peter H{\o}yer, Troy Lee, and Robert {\v S}palek.
\newblock Negative weights make adversaries stronger.
\newblock In {\em Proceedings of the thirty-ninth annual ACM symposium on
  Theory of computing}, pages 526--535. ACM, 2007.

\bibitem[IJ15]{ItoJeffery15}
Tsuyoshi Ito and Stacey Jeffery.
\newblock {Approximate Span Programs}.
\newblock In {\em 43rd International Colloquium on Automata, Languages, and
  Programming (ICALP 2016)}, 2015.

\bibitem[Jef14]{Jeffery14}
Stacey Jeffery.
\newblock {\em Frameworks for Quantum Algorithms}.
\newblock PhD thesis, University of Waterloo, 2014.

\bibitem[KW93]{KW93}
M~Karchmer and A~Wigderson.
\newblock {On span programs}.
\newblock In {\em Annual Structure in Complexity Theory Conference}, pages
  102--111, 1993.

\bibitem[LMR{\etalchar{+}}11]{LMRSS11}
Troy Lee, Rajat Mittal, Ben~W Reichardt, Robert {\v S}palek, and Mario Szegedy.
\newblock Quantum query complexity of state conversion.
\newblock In {\em Foundations of Computer Science, 2011. FOCS'11. 52nd Annual
  IEEE Symposium on}, pages 344--353. IEEE, 2011.

\bibitem[LMR{\v{S}}10]{LMRS10}
Troy Lee, R~Mittal, Ben~W Reichardt, and R~{\v{S}}palek.
\newblock {An adversary for algorithms}.
\newblock {\em arXiv preprint arXiv:1011.3020}, pages 1--18, 2010.

\bibitem[LMS11]{LMM11}
Troy Lee, F~Magniez, and Miklos Santha.
\newblock {A learning graph based quantum query algorithm for finding
  constant-size subgraphs}.
\newblock {\em arXiv preprint arXiv:1109.5135}, pages 1--21, 2011.

\bibitem[LMS17]{LMS17}
Troy Lee, Fr{\'{e}}d{\'{e}}ric Magniez, and Miklos Santha.
\newblock {Improved Quantum Query Algorithms for Triangle Detection and
  Associativity Testing}.
\newblock {\em Algorithmica}, 77(2):459--486, 2017.

\bibitem[Rei09]{Rei09}
Ben~W Reichardt.
\newblock Span programs and quantum query complexity: The general adversary
  bound is nearly tight for every boolean function.
\newblock In {\em Foundations of Computer Science, 2009. FOCS'09. 50th Annual
  IEEE Symposium on}, pages 544--551. IEEE, 2009.

\bibitem[R{\v S}12]{RS12}
Ben~W Reichardt and Robert {\v S}palek.
\newblock Span-program-based quantum algorithm for evaluating formulas.
\newblock volume~8, pages 291--319. Theory of Computing, 2012.

\bibitem[Wan13]{Wang13}
Guoming Wang.
\newblock {Span-program-based quantum algorithm for tree detection}.
\newblock {\em arXiv:1309.7713}, sep 2013.

\end{thebibliography}
\bibliographystyle{alpha}

\appendix
\section{Zero-error Quantum Algorithms vs Non-binary Span Program} \label{app:qa2non-sp}

For every \emph{one-sided error} quantum query algorithm evaluating a binary function $f:\{0,1\}^n\to\{0,1\}$ using $Q$ queries to the input oracle, there exists a span program computing the same function having complexity $Q$ up to a constant factor~\cite{Rei09}. Here we generalize this result to the non-binary case, showing an equivalence between \emph{zero-error} quantum query algorithms for arbitrary functions and non-binary span program. The reason that we consider zero-error algorithms instead of one-sided error ones here, is that for functions with non-binary output one-sided error algorithms is not well-posed.

\begin{theorem}\label{thm:0alg2sp}
For every zero-error quantum query algorithm  evaluating a function $f:D_f\to[m]$, with $D_f\subseteq [\ell]^n$, using $Q$ queries to input oracle, there exists a non-binary span program computing the same function with complexity $O(Q)$.
\end{theorem}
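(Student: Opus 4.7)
The plan is to adapt Reichardt's construction (from the binary one-sided-error case in~\cite{Rei09}) to the non-binary zero-error setting. The zero-error condition is the natural generalization of one-sided error: on every input $x$ with $f(x)=\alpha$, the algorithm's final state lies entirely in the ``output $\alpha$'' subspace, which is exactly the clean orthogonality needed to construct a single negative witness that pairs appropriately with each of the $m-1$ wrong-output target vectors (the strong condition highlighted in Remark~\ref{rem:all-beta}).

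First I would fix the anatomy of the algorithm. Write the algorithm Hilbert space as $\mathcal A = \mathbb C^n \otimes \mathbb C^\ell \otimes \mathcal W$ with input-independent unitaries $U_0, U_1, \ldots, U_Q$ interleaved with oracle calls $O_x$, so that the final state is $\ket{\psi_x} = U_Q O_x U_{Q-1}\cdots O_x U_0\ket{\psi_0}$. Without loss of generality the measurement is in the standard basis, partitioned into disjoint ``output-$\alpha$'' families, with projectors $E_\alpha$ satisfying $E_\alpha \ket{\psi_x}=\ket{\psi_x}$ exactly when $f(x)=\alpha$. Then I would build the non-binary span program on the time-unrolled space $V = \mathcal A^{\oplus (Q+1)}$ with basis labels $\ket{t,a}$. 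The free vectors $I_{\mathrm{free}}$ enforce the input-independent transitions $\ket{t-1,a}\mapsto \ket{t, U_t a}$ for each odd $t$, and the input vectors in $I_{j,q}$ enforce the oracle's partial action on coordinate $j$ with value $q$ between consecutive time slots (for each even $t$). Finally, the target vectors are taken to be $\ket{t_\alpha} = \ket{0,\psi_0} - \sum_{a: E_\alpha a = a}\bra{a}U_Q\cdots U_0\ket{\psi_0}\,\ket{Q, a}$ (or a similar expression combining the initial state at time $0$ with an $\alpha$-filtered combination at time $Q$), engineered so that $\ket{t_\alpha}\in\mathrm{span}\,I(x)$ precisely when the algorithm's evolution on $x$ ends in the $\alpha$-subspace with the correct amplitude.

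For the positive witness at input $x$ with $f(x)=\alpha$, I would use the actual algorithm trajectory $\{\ket{\psi_t^x}\}_{t=0}^Q$: the differences across consecutive time slots are free-vector or $(j,x_j)$-input-vector combinations, and the telescoping sum reproduces $\ket{t_\alpha}$. Since each $\ket{\psi_t^x}$ is a unit vector and there are $O(Q)$ time slots, the positive witness size is $O(Q)$. For the negative witness I would define $\ket{\bar w_x}$ as a suitable linear functional supported on the final time slot, built from the $E_\beta$ subspaces for every $\beta\neq\alpha$, and propagated back through the free-vector constraints so that it vanishes on all available $I(x)$-vectors. Zero error guarantees $\bra{a}\psi_Q^x\rangle=0$ for every $a$ with $E_\alpha a \neq a$, so the pairing $\langle \bar w_x|t_\beta\rangle$ evaluates to $1$ for all $\beta\neq\alpha$ simultaneously; again the size is $O(Q)$ by unit-norm considerations.

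The main obstacle will be choreographing the target vectors and the negative witness so that the single-witness condition $\langle \bar w_x|t_\beta\rangle = 1-\delta_{\alpha,\beta}$ holds uniformly in $\beta$. Naive choices give only that $\ket{t_\beta}\notin\mathrm{span}\,I(x)$ for each $\beta\neq\alpha$ separately, which is strictly weaker than the NBSP definition requires. Resolving this is exactly where the zero-error hypothesis is used: because every ``wrong output'' component of $\ket{\psi_Q^x}$ is identically zero (not merely small), one can aggregate the would-be separate negative witnesses into a single vector. A secondary technical issue is bounding $\wsize_-(P,\bar w)$ uniformly in the number of outputs $m$; this should follow by normalizing the $E_\beta$ contributions, but it is the step where one must check that no hidden dependence on $m$ sneaks into the $O(Q)$ bound.
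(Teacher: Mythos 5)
Your overall architecture (time-unrolled space, free vectors for the input-independent unitaries, input vectors for the oracle steps, telescoping positive witness from the trajectory, negative witness supported on the whole history) matches the paper's proof in Appendix~\ref{app:qa2non-sp}. But there are two genuine gaps.

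First, your target vector cannot be made input-independent as written. The telescoping sum over the actual trajectory on input $x$ produces $\ket{0,\psi_0}-\ket{Q,\psi_Q^x}$, and $\ket{\psi_Q^x}$ is \emph{not} determined by $\alpha=f(x)$ alone: zero error only forces $\ket{\psi_Q^x}$ to lie in the range of $E_\alpha$, and two inputs $x,x'$ with $f(x)=f(x')=\alpha$ may end in different states inside that subspace. Your candidate $\ket{t_\alpha}=\ket{0,\psi_0}-\sum_{a:E_\alpha a=a}\bra{a}U_Q\cdots U_0\ket{\psi_0}\ket{Q,a}$ omits the oracle calls and so is not what the telescoping reproduces. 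The paper closes this by a normalization you are missing: it assumes the algorithm copies the answer to an output register and then \emph{uncomputes} the rest by running in reverse (doubling the query count to $2Q$), so the final state is exactly the fixed basis vector $\ket{0^r,1,f(x)}$ and the target $\ket{t_\alpha}=\ket{0,0^r,1,0}-\ket{2Q+1,0^r,1,\alpha}$ is legitimately input-independent. Without this (or an equivalent canonicalization) your positive witness does not hit a well-defined target.

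Second, your claim that the negative witness size is $O(Q)$ ``by unit-norm considerations'' overlooks a factor of $\ell-1$. In the orthogonal-inputs construction, for each oracle time step and each basis label there are $\ell-1$ unavailable vectors $\ket{\tau-1,p,j,\alpha}-\omega^q\ket{\tau,p,j,\alpha}$ with $q\neq x_j$, each of which has nonzero overlap with $\ket{\bar w_x}$; summing these gives $\|A^\dagger\ket{\bar w_x}\|^2\leq 4Q(\ell-1)$, hence complexity $O(\sqrt{\ell-1}\,Q)$, not $O(Q)$. The paper removes this factor by a second step you do not have: converting the NBSPwOI into a general NBSP in which the subspaces $H_{j,q}$ for different $q$ are non-orthogonal (spanned by the vectors $\ket{v^q_{\tau,p,j,\alpha}}$ themselves), so that $\|\bra{\bar w_x}A\|^2$ collapses to $O(Q)$. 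Your worry about a hidden dependence on $m$, by contrast, is not an issue in the paper's construction, since the negative witness is simply $\sum_\tau\ket{\tau}\ket{\phi_\tau}$ and its size is controlled by the trajectory norms alone.
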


\begin{proof}
Let us assume that the algorithm is performed with three registers, the first one as the workspace, the second one for storing the index $1\leq j\leq n$ and third one for storing the output. We also assume that after finding the answer, the algorithm copies it to the output register and then un-computes the first two registers by running the algorithm in reverse; this doubles the number of queries to the input oracle.

Suppose that the algorithm starts at the quantum state $\ket{\phi_0}=\ket{0^r,1, 0}$ and finishes with
\begin{equation*}
\ket{\phi_{2Q+1}}=U_{2Q+1}O_x \ldots U_{3}O_x U_1 \ket{0^r,1, 0}.
\end{equation*}
Here $U_{2\tau+1}$ for $\tau\in\{0,\ldots,Q\}$ is a unitary independent of input $x$, and $O_x$ is the oracle given by 
$O_x|p,j, \alpha \rangle=\omega^{x_j}|p,j, \alpha\rangle$
where $\omega=e^{2\pi i/\ell}.$
By the above assumption on the structure of the algorithm and the fact that the algorithm has no error we have
$$\ket{\phi_{2Q+1}} =\ket{0^r, 1, f(x)}.$$
Let $\ket{\phi_\tau}$ be the quantum state at $\tau$-th step of the algorithm:
\begin{equation*} \ket{\phi_\tau}=:
\begin{cases}
 U_\tau\ket{\phi_{\tau-1}} 
 &\tau \textrm{ is odd}\\
 O_x\ket{\phi_{\tau-1}}
 & \tau\textrm{ is even}.
 \end{cases}
\end{equation*}
Then we may define the span program as follows:
\begin{itemize}
\item The inner product space is $(2q+2)mn 2^r$-dimensional with orthogonal basis:
$$\Big\{ \ket{\tau,p,j, \alpha}: \tau\in\{0,1,\ldots,2q+1\}, p\in \{0,1\}^r, j\in\{1,\ldots,n\}, \alpha\in [m]\Big\} $$
\item For every $\alpha\in [m]$ the target vector is $\ket{t_\alpha}=\ket{0,0^r,1, 0}-\ket{2Q+1,0^r,1, \alpha}.$
\item Free input vectors are:
$$ \ket{v_{\tau,p,j, \alpha}}:=
\ket{\tau-1,p,j, \alpha}-U_{\tau}\ket{\tau,p,j, \alpha}, $$
for every odd $\tau$ and arbitrary   $p, j, \alpha$.
These vectors are always available no matter what the input is.
\item $I_{j,q}=\Big\{\ket{v^q_{\tau,p,j, \alpha}}:=\ket{\tau-1,p,j, \alpha}-\omega^q\ket{\tau,p,j, \alpha}: \tau \text{ even}, p\in\{0,1\}^r, \alpha\in[m] \Big\}$
\end{itemize}

For a given $x\in D_f$ and even $\tau$ let us define
$$\ket{v_{\tau,p,j, \alpha}} := \ket{v_{\tau,p,j, \alpha}^{x_j}} =\ket{\tau-1,p,j, \alpha}-\omega^{x_j}\ket{\tau,p,j, \alpha}=\ket{\tau-1,p,j, \alpha}-O_x\ket{\tau,p,j, \alpha}.$$
Then $I(x) = \big\{\ket{v_{\tau,p,j, \alpha}} : 1\leq \tau\leq 2Q+1, p\in \{0,1\}^r, 1\leq j\leq n, \alpha\in [m]\big\}$.
Consider the linear combination of these vectors with coefficients $\bra{p, j, \alpha} \phi_{\tau-1}\rangle$ as follows:
\begin{align*}
\sum_{\tau=1}^{2Q+1}\sum_{p, j, \alpha} &\bra{p,j, \alpha}\phi_{\tau-1}\rangle \ket{v_{\tau,p,j, \alpha}} \\
& 
=\sum_{\tau: \text{ odd}} \ket{\tau-1}\ket{\phi_{\tau-1}}-U_{\tau}\ket{\tau}\ket{\phi_{\tau-1}}
 + \sum_{\tau: \text{ even}} \ket{\tau-1}\ket{\phi_{\tau-1}}-O_x\ket{\tau}\ket{\phi_{\tau-1}} \\
& = \sum_{\tau=1}^{2Q+1}  \ket{\tau-1}\ket{\phi_{\tau-1}} - \ket{\tau}\ket{\phi_{\tau}}\\
&=\ket{0}\ket{\phi_0}-\ket{2Q+1}\ket{\phi_{2Q+1}} 
\\&=\ket{t_{f(x)}}.
\end{align*}
Therefore, $\ket{t_{f(x)}}$ belongs to the span of $I(x)$, and the positive witness size equals
$\sum_{\tau \text{: even}} \parallel \ket{\phi_{\tau-1}}\parallel^2=Q$. Here we do not count the coefficients of $\ket{v_{\tau, p, j, \alpha}}$ for odd $\tau$'s since they correspond to free vectors. 

For the negative witness define 
$$\ket{\bar{w}_x}=\sum_{\tau=0}^{2Q+1} \ket{\tau}\ket{\phi_\tau}.$$ 
Observe that $\ket{\bar{w}_x}$ is a negative witness since
\begin{itemize}
\item $\forall \alpha\neq f(x):\, \bra{t_\alpha}\bar{w}_x\rangle= 1-\bra{0^r,1, \alpha}\phi_{2Q+1}\rangle = 1.$
\item $\forall \ket{v_{\tau, p, j, \alpha}^{x_j}}\in I_{j,x_j}:$ $\langle v_{\tau, p, j, \alpha}^{x_j}\ket{\bar{w}_x}= \bra{p,j, \alpha}\phi_{\tau-1}\rangle-  \bra{p, j , \alpha} O_x^{\dagger} | \phi_\tau \rangle=0.$
\item $\forall \ket{v_{\tau, p, j, \alpha}}\in I_{\text{free}}:$
$\bra{v_{\tau,p,j}}\bar{w}_x\rangle=\bra{p,j, \alpha}\phi_{\tau-1}\rangle-  \bra{p, j , \alpha} U_\tau^{\dagger} | \phi_\tau \rangle=0$.  
\end{itemize}
The negative witness size equals 
\begin{align*}
 \sum_{\tau: \text{ even}}\, \sum_{p,j, \alpha}\,\sum_{q \in [\ell]\setminus\{x_j\} }
\Big|\bra{\bar{w}_x} v^q_{\tau, p, j, \alpha}\rangle \Big|^2 & =\sum_{\tau: \text{ even}}\, \sum_{p,j, \alpha}\,\sum_{q \in [\ell]\setminus\{x_j\} }
\Big|\bra{\bar{w}_x} {\tau-1,p,j, \alpha}\rangle- \omega^q \bra{\bar{w}_x}{\tau,p,j, \alpha}\rangle \Big|^2\\&=
\sum_{\tau: \text{ even}}\, \sum_{p,j, \alpha}\,\sum_{q \in [\ell]\setminus\{x_j\} }\Big| \bra{\phi_{\tau-1}}p,j, \alpha\rangle-\omega^{q} \bra{\phi_\tau}p,j, \alpha\rangle\Big|^2 \\&=
 \sum_{\tau: \text{ even}} \,\sum_{p,j, \alpha}\,\sum_{q \in [\ell]\setminus\{x_j\} } \big|(1-\omega^{q+x_j}) \bra{\phi_{\tau-1}}p,j, \alpha\rangle \big|^2\\&=
 \sum_{\tau: \text{ even}} \,\sum_{p,j, \alpha} \big|\bra{\phi_{\tau-1}}p,j, \alpha\rangle \big|^2 \Big(\sum_{q \in [\ell]\setminus\{x_j\} } \big|(1-\omega^{q+x_j})  \big|^2\Big)
\\&\leq
4Q(\ell-1). 
\end{align*}
Therefore, the complexity of this span program is $O(\sqrt{\ell-1}\,Q)$.
Now we convert this NBSPwOI into a non-binary span program to eliminate the factor of $\sqrt{\ell-1}$ in its complexity. The resulting non-binary span program consists of:
\begin{itemize}
\item $H=\bigoplus_j H_j\oplus H_{\rm free},$ 
\item An orthonormal basis for $H_j$ is given by $\big\{\ket{j}\otimes  \ket{\tau,p,j,\alpha}:\tau\in[2q+1],p\in\{0,1\}^r,j\in\{1,\ldots,n\},\alpha\in [m]\big\},$
\item $H_{j,q}={\rm span}\big\{\ket{j}\otimes \ket{v^q_{\tau,p,j, \alpha}}:\tau \text{ even}, p\in\{0,1\}^r, \alpha\in [m]\big\},$ where as before $\ket{v^q_{\tau,p,j, \alpha}}:=\ket{\tau-1,p,j, \alpha}-\omega^q\ket{\tau,p,j, \alpha}$,
\item An orthonormal basis for $H_{\rm free}$ is given by $\big\{\ket{f_{\tau,p,j, \alpha}}: \tau \text{  odd},  p\in\{0,1\}^r, j\in\{1,\ldots,n\}, \alpha\in [m]\big\},$
\item  An orthonormal basis for $V$  is given by $\big\{ \ket{\tau,p,j,\alpha}:\tau\in[2q+1],p\in\{0,1\}^r,j\in\{1,\ldots,n\},\alpha\in [m]\big\},$
\item $A:H\to V$ is given by  
$$A=\sum_{j,q,\tau\text{ even},p,\alpha }\ket{\tau,p,j, \alpha}\bra{j}\otimes \bra{\tau,p,j, \alpha}
+\sum_{j,q, r\text{ odd},p,\alpha}\ket{v_{\tau,p,j, \alpha}}\bra{f_{\tau,p,j, \alpha}},$$
\item $\ket{t_\alpha}=\ket{0,0^r,1, 0}-\ket{2Q+1,0^r,1, \alpha}$ for every $\alpha\in [m]$.
\end{itemize}
For any $x\in f^{-1}(\alpha)$ the positive witness $\ket{w_x}\in H(x)$ is
$$\sum_{\tau \text{ even}}\sum_{p, j, \alpha}\bra{p,j, \alpha}\phi_{\tau-1}\rangle \ket{j}\ket{v_{\tau,p,j, \alpha}}+\sum_{\tau \text{ odd}}\sum_{p, j, \alpha}\bra{p,j, \alpha}\phi_{\tau-1}\rangle \ket{f_{\tau,p,j,\alpha}}.$$
So the positive witness size remains the same. The negative witness $\ket{w_x}\in V$ is also the same, but the negative witness size equals
\begin{equation*}
|\bra{\bar{w}_x}A\|^2=\|\sum_{\tau=0}^{2Q+1} \bra{\tau}\bra{\phi_\tau}A\|^2=
\sum_{j,\tau,p,\alpha }|\langle\phi_\tau\ket{p,j, \alpha}|^2=O(Q).
\end{equation*}
Therefore, the complexity of this NBSP is $O(Q)$.
\end{proof}


\section{Proof of Proposition~\ref{pro:SDP2SPwob}} \label{app:SDP2SPwob}

 Let $\ket{u_{x, j}}$'s and $\ket{v_{x, j}}$'s form a feasible solution of~\eqref{eq:dual-SDP} satisfying
$$\sum_{j: x_j\neq y_j}\langle u_{xj}|v_{yj}\rangle=1- \delta_{f(x), f(y)}.$$ 
We convert this feasible solution to a canonical non-binary span program with orthogonal inputs.
   Define vectors $\ket{a_{x, jq}}$ and $\ket{w_{x, jq}}$ by
$$\ket{a_{x,jq}}=(1-\delta_{x_j,q})\gamma^{-1}\,|u_{x,j}\rangle, 
\qquad\quad |w_{x,jq}\rangle=\delta_{x_j,q} \gamma\,|v_{x,j}\rangle,$$
where $\gamma\neq 0$ is a scaling factor to be determined.

Then let 
$$\ket{a_x} = \bigoplus_{j, q} \ket{a_{x, jq}}, \qquad \quad \ket{w_x} = \bigoplus_{j, q} \ket{w_{x, jq}}.$$
Finally let $A$ be a matrix whose $x$-th row is $\bra{a_x}$. The set of columns of this matrix has a natural partition into $n\times \ell$  subsets which results in sets $I_{j, q}$ of our canonical span program (see Figure~\ref{fig:A-alpha}). 

\begin{figure}[t!]
\centering \includegraphics[scale=.93]{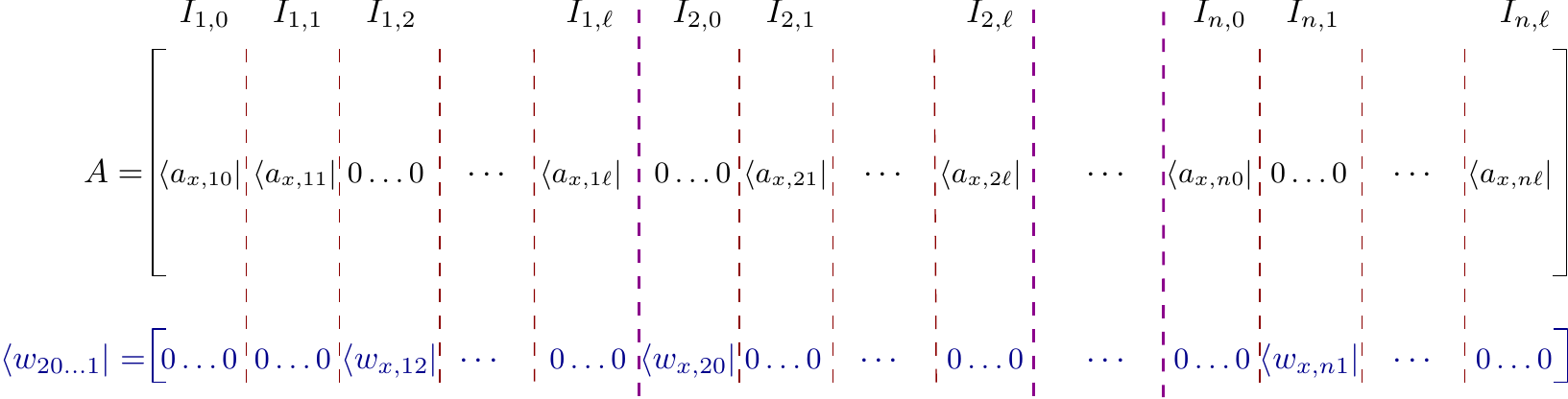}
\caption{The matrix $A$ and one of its rows corresponds to the input $x=1,0,\ldots, 1$ and its positive witness $\ket{w_{20\ldots 1}}$ in a canonical span program with orthogonal inputs. }
\label{fig:A-alpha}
\end{figure}

We now verify that these define a valid span program evaluating $f$. 
First of all, from the definition of $A$, the negative witness $\ket{\bar w_x} = \ket{e_x}$ is orthogonal to all available vectors.
Second, it is clear that in $\ket{w_x}$ the coordinates associated to unavailable vectors are zero. Moreover, we have
$$\bra{e_y}A\ket{w_x} =  \sum_{j, q} \bra{a_{y, jq}} w_{x, jq}\rangle = \sum_{j: x_j\neq y_j} \bra{u_{y, j}} v_{x, j}\rangle =1-\delta_{f(x), f(y)}.$$
Therefore, if $f(x)=\alpha$ we have
$$A\ket {w_x} = \sum_{y: f(y)\neq f(x)}\ket{e_y} = \ket{t_\alpha},$$
as required. Thus the matrix $A$ and $\ket{w_x}$'s form a valid canonical span program for $f$. We then compute its complexity:
$$ \big\|  \ket{w_x}\big\| ^2=\sum_{j, q}   \big\|\ket{w_{x, jq}}\big\|^2= \sum_{j}   \big\|\ket{w_{x, jx_j}}\big\|^2   = \gamma^2\sum_{j}   \big\|  |v_{xj}\rangle\big\| ^2.$$
We also have
\begin{align*}
\big\|  A^\dagger\ket{\bar{w}_x}\big\| ^2 &=\big\|  A^\dagger|e_x\rangle\big\| ^2 
=\sum_{j,q} \big\| \ket{a_{x, jq}}\big\|^2 \\
&= \gamma^{-2}\sum_{j,q\neq x_j} \big\| \ket{u_{x, j}}\big\|^2  
=(\ell-1) \gamma^{-2 }\sum_{j} \big\| \ket{u_{x, j}}\big\|^2.
\end{align*}
Therefore,
\begin{align*}
\mathrm{wsize}(P,w,\bar{w}) & = \max_x \max\big\{\big\|  \ket{w_x}\big \| ^2,\big\|  A^\dagger\ket{\bar{w}_x}\big\| ^2\big\} \\
& =\max_x\max\bigg\{ \gamma^{2} \sum_{j} \big\| |v_{xj}\rangle\big\| ^2, (\ell-1) \gamma^{-2}  \sum_j   \big\| |u_{xj}\rangle\big \| ^2\bigg\}.
\end{align*}
Letting $\gamma^2 = \sqrt{\ell-1}$ we find that $\mathrm{wsize}(P,w,\bar{w})$ is $\sqrt{\ell-1}$ times the objective value of SDP~\eqref{eq:dual-SDP} for vectors $\ket{v_{x, j}}$'s and $\ket{u_{x, j}}$'s.

\end{document}